\def\bbl@set@language#1{%
	\edef\languagename{%
		\ifnum\escapechar=\expandafter`\string#1\@empty
		\else\string#1\@empty\fi}%
	\@ifundefined{babel@language@alias@\languagename}{}{%
		\edef\languagename{\@nameuse{babel@language@alias@\languagename}}%
	}%
	\select@language{\languagename}%
	\expandafter\ifx\csname date\languagename\endcsname\relax\else
	\if@filesw
	\protected@write\@auxout{}{\string\select@language{\languagename}}%
	\bbl@for\bbl@tempa\BabelContentsFiles{%
		\addtocontents{\bbl@tempa}{\xstring\select@language{\languagename}}}%
	\bbl@usehooks{write}{}%
	\fi
	\fi}
\newcommand{\DeclareLanguageAlias}[2]{%
	\global\@namedef{babel@language@alias@#1}{#2}%
}
\newcommand{\coParam}{\xi}
\newtheorem{theorem}{Theorem}
\begin{document}

\title{Designing pretty good state transfer via isospectral reductions}

\author{M. Röntgen}
\affiliation{%
	Zentrum für optische Quantentechnologien, Universität Hamburg, Luruper Chaussee 149, 22761 Hamburg, Germany
}%

\author{N. E. Palaiodimopoulos}%
\affiliation{%
	Department of Physics, University of Athens, 15771 Athens, Greece
}%

\author{C. V. Morfonios}%
\affiliation{%
	Zentrum für optische Quantentechnologien, Universität Hamburg, Luruper Chaussee 149, 22761 Hamburg, Germany
}%

\author{I. Brouzos}%
\affiliation{%
	Department of Physics, University of Athens, 15771 Athens, Greece
}%
\affiliation{%
	Le Mans University, LAUM UMR CNRS 6613, Av. O. Messiaen, 72085, Le Mans,
	France
}

\author{M. Pyzh}%
\affiliation{%
	Zentrum für optische Quantentechnologien, Universität Hamburg, Luruper Chaussee 149, 22761 Hamburg, Germany
}%

\author{F. K. Diakonos}%
\affiliation{%
	Department of Physics, University of Athens, 15771 Athens, Greece
}%

\author{P. Schmelcher}
\affiliation{%
	Zentrum für optische Quantentechnologien, Universität Hamburg, Luruper Chaussee 149, 22761 Hamburg, Germany
}%
\affiliation{%
	The Hamburg Centre for Ultrafast Imaging, Universität Hamburg, Luruper Chaussee 149, 22761 Hamburg, Germany
}%

\newcommand{\hamil}{\mathbf{H}}

\begin{abstract}

We present an algorithm to design networks that feature pretty good state transfer (PGST), which is of interest for high-fidelity transfer of information in quantum computing.
Realizations of PGST networks have so far mostly relied either on very special network geometries or imposed conditions such as transcendental on-site potentials.
However, it was recently shown [\emph{Eisenberg et al., arXiv:1804.01645}] that PGST generally arises when a network's eigenvectors and the factors $P_{\pm}$ of its characteristic polynomial $P$ fulfill certain conditions, where $P_{\pm}$ correspond to eigenvectors which have $\pm 1$ parity on the input and target sites.
We combine this result with the so-called isospectral reduction of a network
to obtain $P_{\pm}$ from a dimensionally reduced form of the Hamiltonian.
Equipped with the knowledge of the factors $P_{\pm}$, we show how a variety of setups can be equipped with PGST by proper tuning of $P_{\pm}$.
Having demonstrated a method of designing networks featuring pretty good state transfer of single site excitations, we further show how the obtained networks can be manipulated such that they allow for robust storage of qubits.
We hereby rely on the concept of compact localized states, which are eigenstates of a Hamiltonian localized on a small subdomain, and whose amplitudes completely vanish outside of this domain.
Such states are natural candidates for the storage of quantum information, and we show how certain Hamiltonians featuring pretty good state transfer of single site excitation can be equipped with compact localized states such that their transfer is made possible.
\end{abstract}

\maketitle

\section{Introduction}
The ability to reliably transfer information through a quantum system is of key importance in the quest towards quantum computers. 
One particularly appealing approach is that of \emph{perfect state transfer} (PST) \cite{Bose2003PRL91207901QuantumCommunicationUnmodulatedSpin,Christandl2004PRL92187902PerfectStateTransferQuantum,Christandl2005PRA71032312PerfectTransferArbitraryStates} of a given state -- usually a single site excitation of an $XY$-Hamiltonian -- from an input to a target site. 
What makes PST appealing is that it achieves perfect transfer fidelity (the portion of the final state at the desired site) $F = 1$ by simple time-evolution of the input excitation with the time-independent Hamiltonian.
From a realistic viewpoint, however, the strong requirement of unity fidelity is never met due to imperfections; it rather limits severely the flexibility in the design of quantum networks for state transfer.
A less restrictive alternative to PST is the concept of \emph{pretty good} \cite{Godsil2012DM312129StateTransferGraphs} (also called almost perfect \cite{Vinet2012PRA86052319AlmostPerfectStateTransfer}) \emph{state transfer} (PGST), where $F$ gets arbitrarily close to unity at a corresponding time: Specifically, for every $\epsilon > 0$ there is a time $t_{\epsilon}$ such that $F(t_{\epsilon}) > 1 - \epsilon$, where $F(t) = |\braket{\psi_{I}|exp(i \hamil t)|\psi_{F}}|^2$ (setting $\hbar = 1$) for a transfer from state $\ket{\psi_{I}}$ to state $\ket{\psi_{F}}$ at time $t$, with $\hamil$ denoting the Hamiltonian.
Clearly, PGST includes the case of PST and is therefore a broader concept.
Still, the design of PGST-Hamiltonians is challenging, since it usually requires information about the \emph{exact} eigenvalue spectrum.
So far, many approaches to PGST are therefore based on special Hamiltonian designs such as certain graph products \cite{Coutinho2016AQPrettyGoodStateTransfer,Fan2013LAIA4382346PrettyGoodStateTransfer,Pal2017EJC24PrettyGoodStateTransfer,vanBommel2016AQCompleteCharacterizationPrettyGood,Ackelsberg2016LAIA506154LaplacianStateTransferCoronas,Ackelsberg2016LAIA506154LaplacianStateTransferCoronas}.
A general and intuitive design mechanism of PGST-Hamiltonians is thus lacking.

Recently, progress in this direction has been made in Ref. \cite{Eisenberg2018DMPrettyGoodQuantumState}.
There, an approach is presented that achieves PGST between two sites $u$ and $v$ \emph{without} direct tuning of the eigenvalue spectrum.
The approach is based on Hamiltonians $\hamil$ which feature so-called cospectral sites $u$ and $v$ for a range of parameters.
In Hamiltonians with such cospectral sites $u$ and $v$, all eigenvectors can be chosen to have parity $\pm 1$ on $u$ and $v$.
Eisenberg et al. then show that PGST between $u$ and $v$ automatically arises if the factors $P_{\pm}$, which are related to eigenvectors which have non-vanishing amplitudes on $u$ and $v$ and additionally have $\pm 1$ parity on them, respectively, of the characteristic polynomial of $\hamil$ fulfill certain conditions.
The task of achieving PGST therefore boils down to proper tuning of the factors $P_{\pm}$.
In practice, though, obtaining these factors from the underlying Hamiltonian is not easy. In Ref. \cite{Eisenberg2018DMPrettyGoodQuantumState}, $P_{\pm}$ are (up to special cases involving symmetries or very small setups) not obtained, but indirect methods, which manipulate $\hamil$ such that $P_{\pm}$ are enforced to meet the desired properties, are presented.
An example of such a method is the addition of transcendental numbers to the values of certain on-site potentials of the Hamiltonian.
While elegant, this method limits the practical applicability, and the question arises whether other, more practical methods of designing PGST Hamiltonians exist.

In this work, we present such a method by pursuing an alternative road to PGST.
Namely, by directly obtaining the polynomials $P_{\pm}$ from an underlying symmetric Hamiltonian that features cospectral sites $u$ and $v$.
To this end, we combine the mathematical relations underlying the works in Ref. \cite{Eisenberg2018DMPrettyGoodQuantumState,Kempton2017QIP16210PrettyGoodQuantumState} with the theory of isospectral reductions \cite{Bunimovich2011N25211IsospectralGraphTransformationsSpectral,Bunimovich2012C22033118IsospectralCompressionOtherUseful,Bunimovich2014IsospectralTransformationsNewApproach,VasquezFernandoGuevara2014NLAwA22145PseudospectraIsospectrallyReducedMatrices,Duarte2015LAIA474110EigenvectorsIsospectralGraphTransformations,Smith2019PA514855HiddenSymmetriesRealTheoretical,Kempton2019AMCharacterizingCospectralVerticesIsospectral}.
Isospectral reduction is a method to reduce the size of a given Hamiltonian while keeping a large amount of information on its eigenvalues and eigenvectors.
We utilize the isospectral reduction to ``compress'' only the relevant spectral information for the problem at hand by building upon the very recent results of Ref. \cite{Kempton2019AMCharacterizingCospectralVerticesIsospectral}.
These results put strong constraints on the structure of the isospectral reduction of a Hamiltonian that features cospectral sites.
We use these structural constraints to extract the $P_{\pm}$ from the isospectral reduction of $\hamil$.
Equipped with $P_{\pm}$, we show how this allows for a convenient and powerful algorithm for designing Hamiltonians featuring PGST by properly tuning $P_{\pm}$ whilst maintaining the cospectrality of $u$ and $v$.
In order to be self-contained, we also collect known facts from the literature and condense them into a detailed method to generate Hamiltonians that feature cospectral sites $u$ and $v$.

Interestingly, this cospectrality is often accounted for by \emph{spatial local symmetries}, i.e., symmetries, which are only valid in spatial subdomains of the whole system.
Usually, the signatures of such local symmetries are only indirectly encoded into so-called non-local currents, as has been shown in Refs. \cite{Kalozoumis2013PRA88033857LocalSymmetriesPerfectTransmission,Kalozoumis2014PRL113050403InvariantsBrokenDiscreteSymmetries,Kalozoumis2015AP362684InvariantCurrentsScatteringLocally,Zampetakis2016JPA49195304InvariantCurrentApproachWave,Rontgen2017AP380135NonlocalCurrentsStructureEigenstates,Morfonios2017AP385623NonlocalDiscreteContinuityInvariant,Morfonios2017AP385623NonlocalDiscreteContinuityInvariant}.
On the contrary, the impact of the underlying local symmetries is directly visible in setups featuring cospectral vertices $u$ and $v$, where all eigenvectors are (in the case of degeneracies, can be chosen to be) locally parity symmetric on these sites.
It would thus be interesting to analyze cospectral Hamiltonians within the framework developed in those works.

Having demonstrated how to design networks capable of PGST of single site excitations, we show how these networks can be modified to allow for robust storage of qubits.
To this end, we slightly modify these networks, thereby equipping them with so-called \emph{compact localized states}.
Such states are eigenstates of the underlying Hamiltonian \cite{Rontgen2018PRB97035161CompactLocalizedStatesFlat,Leykam2018AP31473052ArtificialFlatBandSystems}, and are perfectly localized on a finite number of sites.
They are thus ideally suited for the storage of qubits, and we show how, after equipping networks with compact localized states, these can also be pretty well transferred.

This work is structured as follows.
We first define the necessary and sufficient conditions for the realization of PGST in \cref{sec:necAndSufCond}.
We then investigate the necessary condition, namely strong cospectrality, which is a stronger version of cospectrality, in more detail in \cref{sec:cospectrality}, and the connection of this property to symmetries in \cref{sec:cospecSymmetries}.
Our treatment of strong cospectrality is completed in \cref{sec:designingCospectral}, where we show how Hamiltonians with this property can be designed.
In \cref{sec:relationMinimalPolynomials}, we introduce isospectral reductions, and show how they can be harnessed to extract the polynomials $P_{\pm}$.
In \cref{sec:Application}, we use this method to construct an algorithm for the design of graphs featuring PGST. This algorithm represents the main novelty and also one of the two highlights of this work.
We apply the algorithm to a simple example in \cref{sec:Example}.
In \cref{sec:CLSTransfer}, we present the necessary modifications needed for PGST of compact localized states.
Finally, we conclude our work in \cref{sec:conclusions}.

\section{Theory: Pretty good state transfer of single-site excitations} \label{sec:MathematicalFoundations}
Throughout this work, we will consider setups described by symmetric Hamiltonians of the form
\begin{equation} \label{eq:Hamiltonian}
	\hamil = \sum_{i} E_{i} \ket{i}\bra{i} + \sum_{<i,j>}^{} h_{i,j} \ket{i}\bra{j}
\end{equation}
with real on-site potentials $E_{i}$ and couplings $h_{i,j} = h_{j,i}$, where the sum in \cref{eq:Hamiltonian} runs over all interconnected sites $i$ and $j$.
We use bold-faced script for both vectors and matrices.
The Hamiltonian given by \cref{eq:Hamiltonian} can be represented, for example, by coupled waveguide arrays \cite{Garanovich2012PR5181LightPropagationLocalizationModulated,Szameit2012DiscreteOpticsFemtosecondLaser}.
However, in the context of quantum computers, a natural choice are spin networks, where each site represents a spin-$1/2$ qubit (measured up or down).
The Heisenberg XX interaction Hamiltonian then reduces to the simple description \cref{eq:Hamiltonian} within the subspace of one excitation (1 spin up and all others down) \cite{Bose2007CP4813QuantumCommunicationSpinChain}.

In the course of this work, we will often depict $\hamil$ as a graph, i.e., as a collection of vertices and edges connecting them.
The adjacency matrix of the graph then equals $\hamil$.
As this establishes a one-to-one relation between graphs and the underlying $\hamil$, and we will use these two terms interchangeably. Likewise, we will use the terms ``site'' and ``vertex'' interchangeably throughout this work.

In the following, we will comment on the conditions for PGST, and show how it can be achieved.
In order to help the reader in comprehending the different aspects involved, we visualize in \cref{fig:guidingFigure} the main mathematical background of our method to achieve PGST.
The overview \cref{fig:guidingFigure} contains all core mathematical theorems used in this work in compact form.
We stress that the style of presentation at this point aims at being self-contained, thereby transferring insights from graph theory to a broader physics community in \cref{sec:cospectrality,sec:cospecSymmetries,sec:designingCospectral}.
\Cref{sec:relationMinimalPolynomials,sec:Application} contain, along with the results provided in \cref{sec:CLSTransfer}, the highlights and main novelties of this work.

\begin{figure*} 
	\centering
	\includegraphics[max size={\textwidth}{\textheight}]{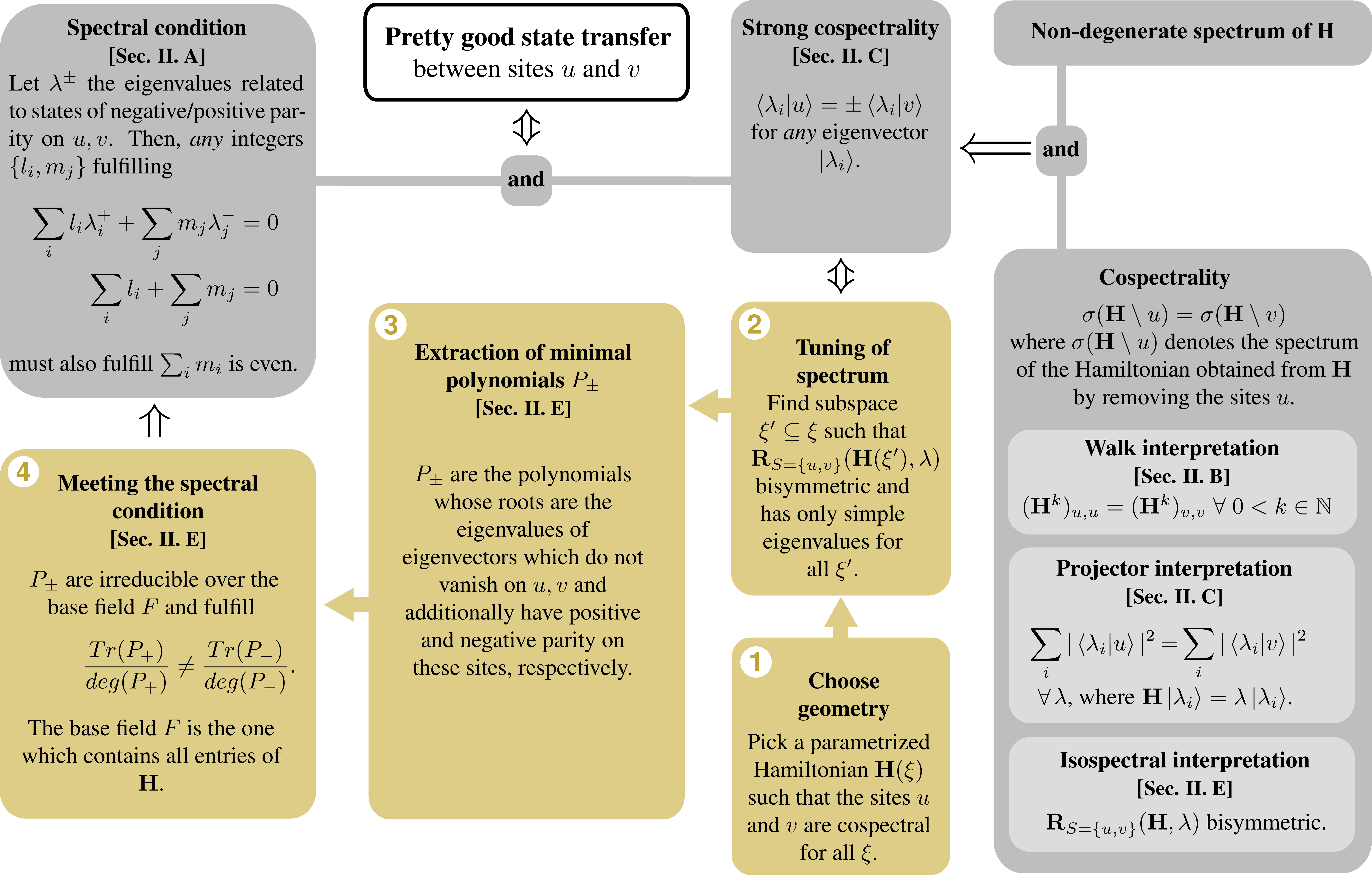}
	\caption{Grey part: The connection between main theorems that lead to realizations of PGST. Double lines with arrows denote mathematical relations $\Leftarrow,\Rightarrow,\Leftrightarrow$. Yellow/golden part: Essential steps (enumerated $1$ to $4$) of the algorithm for the design of PGST-Hamiltonians [presented in \cref{sec:Application}].
	The isospectral reduction $\mathbf{R_{S}(\hamil,\lambda)}$ is defined in \cref{eq:isospectralReduction}.
	}
	\label{fig:guidingFigure}
\end{figure*}

\subsection{Necessary and sufficient conditions} \label{sec:necAndSufCond}
In order to support PGST between two sites $u$ and $v$, a Hamiltonian $\hamil$ has to fulfill the following two conditions, whose combination is necessary and sufficient: (i) The two sites $u$ and $v$ must be \emph{strongly cospectral} (see the following subsection) and (ii) its spectrum must fulfill the following condition \cite{Banchi2017JMP58032202PrettyGoodStateTransfer}:
Any integers $\{l_{i},m_{j}\}$ which fulfill
\begin{align} \label{eq:PGSTCond1}
\sum_{i} l_{i} \lambda^{+}_{i} + \sum_{j} m_{j} \lambda^{-}_{j} &= 0\\
\sum_{i} l_{i} + \sum_{j} m_{j} &= 0 \label{eq:PGSTCond2}
\end{align}
must also fulfill
\begin{equation} \label{eq:PGSTCond3}
\sum_{i} m_{i} \; \text{is even} .
\end{equation}
Here, $\lambda^{+}_{i}, \lambda^{-}_{j}$ are the eigenvalues associated to eigenvectors $\ket{\psi_{i}^{+}}, \ket{\psi_{j}^{-}}$ of $\hamil$ that fulfill
\begin{align*}
	\braket{\psi_{i}^{+}|u} &= +\braket{\psi_{i}^{+}|v} \ne 0 \\
	\braket{\psi_{j}^{-}|u} &= -\braket{\psi_{j}^{-}|v} \ne 0
\end{align*}
where $\ket{u},\ket{v}$ describe single-site excitations of sites $u$ and $v$, respectively.

Note that there is always at least one set of integers $\{l_{i},m_{j}\}$ fulfilling \cref{eq:PGSTCond1,eq:PGSTCond2}, namely, the trivial choice $l_{i} = m_{j} = 0 \; \forall \; i,j$, which also fulfills \cref{eq:PGSTCond3}.
In certain cases, this trivial choice is also the \emph{only one} fulfilling \cref{eq:PGSTCond1,eq:PGSTCond2}.
An example is the case where there are only two eigenvalues, $\lambda^{+} = \sqrt{2}, \lambda^{-} = \sqrt{3}$.
Then
\begin{equation*}
	l \sqrt{2} + m \sqrt{3} = 0,
\end{equation*}
for integers $l,m$, can only be fulfilled when $l = m = 0$. The setup would thus feature PGST between $u$ and $v$.

\subsection{Geometric interpretation of cospectrality} \label{sec:cospectrality}

\begin{figure} 
	\centering
	\includegraphics[max size={\columnwidth}{\textheight}]{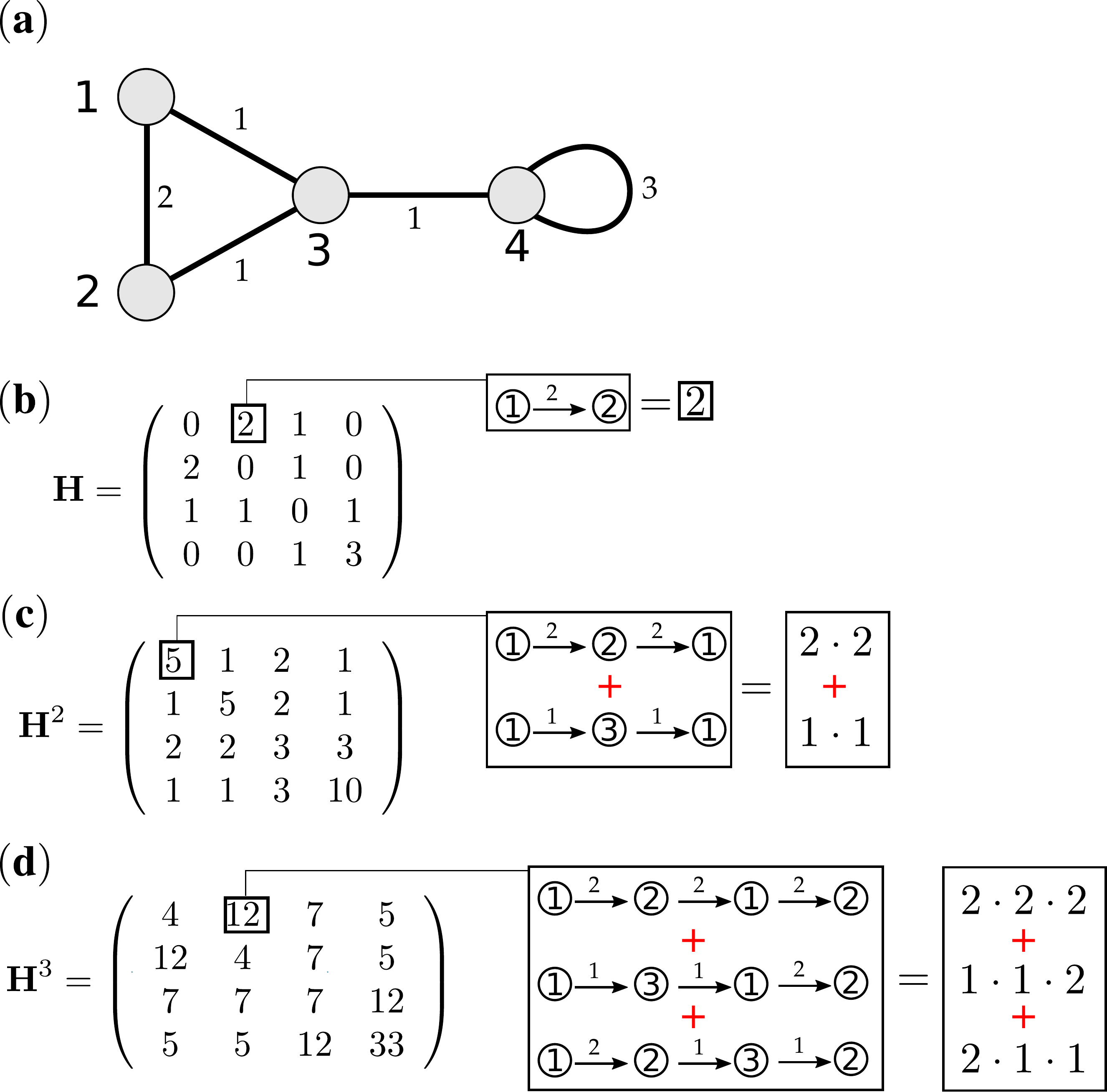}
	\caption{The entries of matrix powers of any matrix can be interpreted in terms of walks. \textbf{(a)} shows a graph, described by the matrix $\hamil$ denoted in \textbf{(b)}. We have here used the convention from graph theory that diagonal elements $\hamil_{i,i}$ are plotted as links from $i$ to itself. \textbf{(b -- d)} provide help on how to interpret the entries of powers $\hamil^{k}$ (see text for details).
	}
	\label{fig:WalkInterpretation}
\end{figure}

\begin{figure} 
	\centering
	\includegraphics[max size={\columnwidth}{\textheight}]{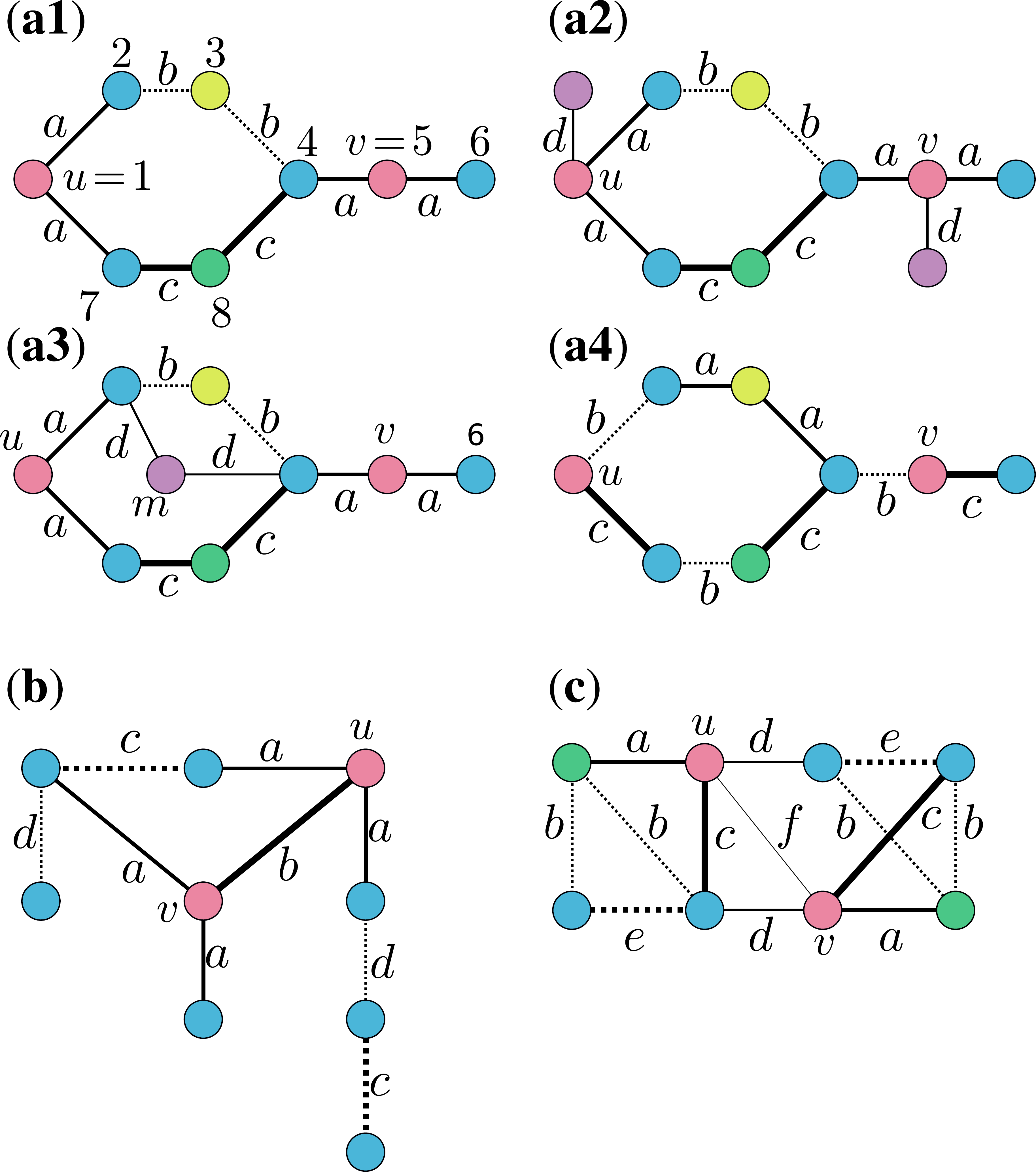}
	\caption{A collection of graphs featuring cospectral vertices. In every graph, the two red vertices are cospectral, provided that the equality of certain couplings (denoted by different line styles and labels) is kept, and that vertices of equal color share the same, arbitrarily valued on-site potential.
	}
	\label{fig:latentSymmetriesParameters}
\end{figure}

A necessary condition for PGST is that $u$ and $v$ are \emph{strongly cospectral} \cite{Banchi2017JMP58032202PrettyGoodStateTransfer}.
As strongly cospectral vertices are also cospectral \cite{Godsil2017AMStronglyCospectralVertices} [see \cref{fig:guidingFigure}], we will first investigate and understand this weaker property before turning to its stronger version.
Two vertices $u,v$ are said to be cospectral if $\sigma(\hamil\setminus u) = \sigma(\hamil \setminus v)$, where $\sigma(\hamil)$ denotes the eigenvalue spectrum of $\hamil$, and $\hamil \setminus u$ denotes the Hamiltonian obtained from $\hamil$ by deleting the $u$th row and column.
For our purpose, it is easier to rely on an equivalent condition \cite{Godsil2017AMStronglyCospectralVertices} in terms of the diagonal entries of powers of $\hamil$.
Namely, $u$ and $v$ are cospectral if and only if
\begin{equation} \label{eq:cospectralityConditionMatrixPowers}
	(\hamil^{k})_{u,u} = (\hamil^{k})_{v,v}
\end{equation}
for all non-negative integers $k < N$, where $\hamil \in \mathbb{R}^{N \times N}$.
As a side remark, we note that \cref{eq:cospectralityConditionMatrixPowers} automatically holds for all $k > 0$ provided that it holds for $0 < k < N$.
This is due to the Cayley-Hamilton theorem, which states that every matrix power $\hamil^{k\ge N}$ can be expanded in terms of smaller powers $\hamil^{k < N}$, i.e.,
\begin{equation*}
\hamil^{k\ge N} = \sum_{i=0}^{N-1} a_{i}^{(k)} \hamil^{i}
\end{equation*}
with $a_{i}^{(k)}$ being the expansion coefficients.
These coefficients are scalars, and therefore \cref{eq:cospectralityConditionMatrixPowers} must hold for all non-negative integers $k$, provided that it holds for $0 \le k<N$.

While well-known in graph theory, it is perhaps surprising to many physicists that the entries of $\hamil^k$ (for integer $k>0$) possess a convenient interpretation.
To this end, we interpret the Hamiltonian matrix as a graph, i.e., as a network of vertices $V_{i}$ connected to each other by weighted edges $e_{i,j} = \{V_{i},V_{j}\}$, with weight $w(e_{i,j}) = \hamil_{i,j}$.
This is exemplarily done in \cref{fig:WalkInterpretation} (\textbf{a}) for the Hamiltonian given in \cref{fig:WalkInterpretation} (\textbf{b}).
In this picture of representing $\hamil$ as a graph, every matrix element $\hamil_{i,j}\ne 0$ is connected to an edge between vertices $i$ and $j$; and in particular, non-vanishing diagonal elements $\hamil_{i,i}\ne 0$ refer to a link from site $i$ to itself, with weight given by $w(e_{i,i}) = \hamil_{i,i}$.

Now that we have interpreted the entries of $\hamil$ in terms of edges, we show how entries of higher-order powers $\hamil^{k>1}$ can be interpreted in terms of \emph{walks}.
A walk can be thought of a route through the graph from one vertex to another by walking along the edges connecting neighboring vertices.
Mathematically, it is defined as an alternating sequence of vertices and edges, where each edge must connect its precursor vertex to its successor.
For example, in \cref{fig:WalkInterpretation} \textbf{(a)}, a walk of length $2$ from vertex $1$ to $4$ would be the sequence $p = \{V_{1}, e_{1,3}, V_{3}, e_{3,4}, V_{4}\}$.
In order to interpret the entries of $\hamil^k$, we note that, just as each edge $e_{i,j}$ can be given a weight $w(e_{i,j})$, we can also give each walk a weight by \emph{multiplying the weights of all edges occurring within this walk}.
Thus, the weight of the walk $p = \{V_{1}, e_{1,3}, V_{3}, e_{3,4}, V_{4}\}$ would be $w(p) = w(e_{1,3}) \cdot w(e_{3,4}) = 1 \cdot 1$. Equipped with these definitions, one can show that [a proof is provided in \cref{app:walkInterpretationProof}]
\begin{equation} \label{eq:walkInterpretation}
	(\hamil^{k>0})_{a,b} = \sum_{p} w\left( p_{a,b}^{(k)} \right)
\end{equation}
where $p_{a,b}^{(k)}$ denotes one possible walk of length $k$ between vertices $a$ and $b$, and the sum is over all such walks.
In other words, the value of the matrix element $(\hamil^{k})_{i,j}$ is equal to the sum of weights of all walks of length $k$ between vertices $i$ and $j$.
In \Cref{fig:WalkInterpretation} (\textbf{b -- d}), we have visualized this interpretation of walks, and have also explicitly given the integer powers $\hamil^{k<N}$.

We now connect the interpretation of matrix elements of $\hamil^{k}$ in terms of walks to the cospectrality of two vertices $u$ and $v$.
As we have seen above, these are cospectral if and only if \cref{eq:cospectralityConditionMatrixPowers} is fulfilled for all integer $k <N$, with $N$ being the number of sites contained in $\hamil$.
Now, by interpreting the entries of $\hamil^{k}$ in terms of walks, the cospectrality of $u$ and $v$ can therefore be determined in a simple and straightforward manner.
Namely, by evaluating all walks of length $k$ that go from $u$ onto itself, and those that go from $v$ onto itself, and comparing the respective sum of weights, order by order in $k<N$.
Thus, in \cref{fig:WalkInterpretation}, $(\hamil^{k})_{1,1} = (\hamil^{k})_{2,2}$ for $k < 4$ [and, by the Cayley-Hamilton theorem, also for all integer $k>0$], which makes the sites $u=1$ and $v=2$ cospectral.
Alternatively, one can also rely on the statement that two sites $u$ and $v$ are cospectral if and only if $\sigma(\hamil\setminus u) = \sigma(\hamil \setminus v)$.
As the graph of $\hamil \setminus 1$ is identical to that of $\hamil \setminus 2$, their spectra $\sigma(\hamil \setminus 1) = \sigma(\hamil \setminus 2)$ are trivially identical, and the sites $u=1$ and $v=2$ are therefore cospectral.

We show a collection of cospectral graphs in \cref{fig:latentSymmetriesParameters}.
In every graph, the two red vertices (labeled $u$ and $v$) are cospectral, provided that any two couplings denoted by the same label and line style are identical, and that any two vertices sharing the same color also have identical on-site potential.
Let us now investigate these graphs in more detail.
By comparing different variations, it can be seen that certain changes do not break the cospectrality of two vertices.
For example, in \cref{fig:latentSymmetriesParameters} (\textbf{a2}), we have modified the graph from \cref{fig:latentSymmetriesParameters} (\textbf{a1}) by identically coupling each red vertex to an additional purple vertex.
In \cref{fig:latentSymmetriesParameters} (\textbf{a3}), we have modified the graph of \cref{fig:latentSymmetriesParameters} (\textbf{a1}) by inserting the ``central'' vertex $m$.
We term this vertex central since it can be reached from sites $1$ and $5$ by a walk comprising two steps, and the corresponding weights of these two walks are identical.
To understand why inserting this vertex does not break the cospectrality, one only needs to investigate the influence of this change by comparing the diagonal matrix elements $(\hamil_{\text{B}}^{k})_{S,S}$ and $(\hamil^{k})_{S,S}$. Here, $\hamil_{\text{B}}$ and $\hamil$ describe the setup of \cref{fig:latentSymmetriesParameters} (\textbf{a1}) and (\textbf{a3}), respectively, and $S = \{u,v\}$ label the two red sites.
Before the change, $u$ and $v$ were cospectral, so that $(\hamil_{\text{B}}^{k})_{u,u} = (\hamil_{\text{B}}^{k})_{v,v} \; \forall \; k$.
Thus, to understand why the cospectrality is kept, we only need to look at the differences $(\Delta \hamil^{k})_{S,S} = (\hamil^{k})_{S,S} - (\hamil_{\text{B}}^{k})_{S,S}$ caused by inserting the new vertex.
Though tedious, it is a straightforward task to show that $(\Delta \hamil^{k})_{1,1} = (\Delta \hamil^{k})_{5,5}$ for $k<9$, and, by the Cayley-Hamilton theorem, therefore for all $k$.
The addition of vertex $m$ does thus, at each order $k$, add an equally valued sum of weights of walks from site $1$ to itself compared to those from site $5$ to itself.
For this reason, its addition does not change the cospectrality of $u$ and $v$.

\subsection{Strong cospectrality and the impact of symmetries} \label{sec:cospecSymmetries}
As we have seen above, cospectrality is linked to the geometric and spectral properties of a graph.
It is likewise linked to properties of the graph's eigenstates.
Indeed, it can be shown that two sites $u,v$ are cospectral if and only if \cite{Godsil2017AMStronglyCospectralVertices}, for $\hamil \ket{\lambda_{i}} = \lambda \ket{\lambda_{i}}$,
\begin{equation} \label{eq:cospectralProjectors}
	\sum_{i}^{} |\braket{\lambda_{i}|u}|^2 = \sum_{i}^{} |\braket{\lambda_{i}|v}|^2
\end{equation}
is fulfilled for all $\lambda$.
In words, $u$ and $v$ are cospectral if and only if, within each degenerate subspace, the sum of squares of absolute values of projections on sites $u$ is equal to that of projections on site $v$.

If $u$ and $v$ are cospectral and additionally \cite{Eisenberg2018DMPrettyGoodQuantumState}
\begin{equation} \label{eq:StronglyCospectralProjectors}
\braket{\tilde{\lambda}_{i} |u} = \pm \braket{\tilde{\lambda}_{i} |v}
\end{equation}
for \emph{any} superposition $\ket{\tilde{\lambda}_{i}} = \sum_{j} c_{j} \ket{\lambda_{j}}$ of degenerate states $\ket{\lambda_{j}}$, then $u$ and $v$ are said to be \emph{strongly cospectral}.
Therefore, strong cospectrality implies cospectrality, but that the reverse is not necessarily true.
Unlike cospectrality, which can be readily interpreted and tested for in terms of walks, we are not aware of an easy, i.e., without computing the determinant or the eigenstates of the graph, method to test whether a given general graph is strongly cospectral or not. As the field of cospectral vertices is quite young, there is hope that this may change in the future, and we refer the interested reader to Ref. \cite{Godsil2017AMStronglyCospectralVertices} for further information on the fascinating field of strongly cospectral vertices.

With the above statements in mind, let us now investigate the symmetries of cospectral graphs. To this end, we compare the graphs shown in \cref{fig:latentSymmetriesParameters} to the one shown in \cref{fig:WalkInterpretation} (\textbf{a}). The latter graph has the special property that the underlying Hamiltonian is invariant under the permutation of vertices $1$ and $2$ and therefore commutes with the corresponding permutation operator.
As is well-known, such a symmetry has a drastic impact: The eigenstates are (or, in case of degeneracies, can be chosen to have) parity $\pm 1$ with respect to a flip of sites $u$ and $v$. Thus, they fulfill \cref{eq:cospectralProjectors}, so that $u$ and $v$ are cospectral.
Provided that states of negative and positive parity are non-degenerate to each other, they additionally fulfill \cref{eq:StronglyCospectralProjectors}, so that $u$ and $v$ are even strongly cospectral.
While the fact that a permutation symmetry of $u$ and $v$ leads to their (strong) cospectrality should be no surprise, things change when inspecting the graphs shown in \cref{fig:latentSymmetriesParameters}. While they are indeed all cospectral, none of them is invariant under any non-trivial permutation of vertices.
In other words, the underlying Hamiltonian does not commute with the corresponding permutation matrices. However, due to cospectrality their eigenstates fulfill \emph{the same equation} \cref{eq:cospectralProjectors} [and, depending on degeneracies, also \cref{eq:StronglyCospectralProjectors}] as they would do in the presence of a permutation symmetry.
For this reason, graphs (or, just as well, matrices) that lack direct symmetries, but whose eigenstates fulfill \cref{eq:cospectralProjectors,eq:StronglyCospectralProjectors} were recently termed \emph{latently symmetric} \cite{Smith2019PA514855HiddenSymmetriesRealTheoretical,Kempton2019AMCharacterizingCospectralVerticesIsospectral}.
However, although these symmetries may indeed seem hidden, we would like to mention here that all the graphs shown in \cref{fig:latentSymmetriesParameters} \emph{indeed} feature \emph{local symmetries}, i.e., symmetries within subdomains of the system, such that the underlying symmetry operations commute with the Hamiltonian of the subsystem, but not with that of the complete one. An example is the subsystem of sites $2,3,4$ in \cref{fig:latentSymmetriesParameters} (\textbf{a1}), which is invariant under the permutation of sites $2$ and $4$.
Given the high number of local symmetries in latently symmetric setups, it would be interesting to investigate such systems under the recently established framework of local symmetries \cite{Kalozoumis2015AP362684InvariantCurrentsScatteringLocally,Kalozoumis2014PRL113050403InvariantsBrokenDiscreteSymmetries,Kalozoumis2013PRA88033857LocalSymmetriesPerfectTransmission,Morfonios2017AP385623NonlocalDiscreteContinuityInvariant,Morfonios2017AP385623NonlocalDiscreteContinuityInvariant,Rontgen2017AP380135NonlocalCurrentsStructureEigenstates,Zampetakis2016JPA49195304InvariantCurrentApproachWave}, which provides dedicated tools for the analysis of such setups.

As a concluding remark, we note that there are still many questions open regarding the connection between local symmetries of a Hamiltonian $\hamil$ and the cospectrality of two sites $u$ and $v$ of $\hamil$.
Given such a Hamiltonian, it is clear that the subsystem $\hamil_{SS} \in \mathbb{R}^{2 \times 2}$ with $S = \{u,v\}$ is invariant under the exchange of $u$ and $v$, since $\hamil$ is symmetric and cospectrality of $u$ and $v$ implies that $\hamil_{u,u} = \hamil_{v,v}$.
However, it is yet unknown whether and to which amount $\hamil$ must \emph{necessarily} feature more (i.e., apart from that of $\hamil_{SS}$) local symmetries in order to allow for the cospectrality of $u$ and $v$.
Although the question about the necessity of local symmetries for cospectrality is thus still open, local symmetries often naturally appear during the process of designing networks with cospectral vertices, as we will see in the following section.

\subsection{Designing graphs featuring strongly cospectral vertices} \label{sec:designingCospectral}
\begin{figure*} 
	\centering
	\includegraphics[max size={0.7\textwidth}{\textheight}]{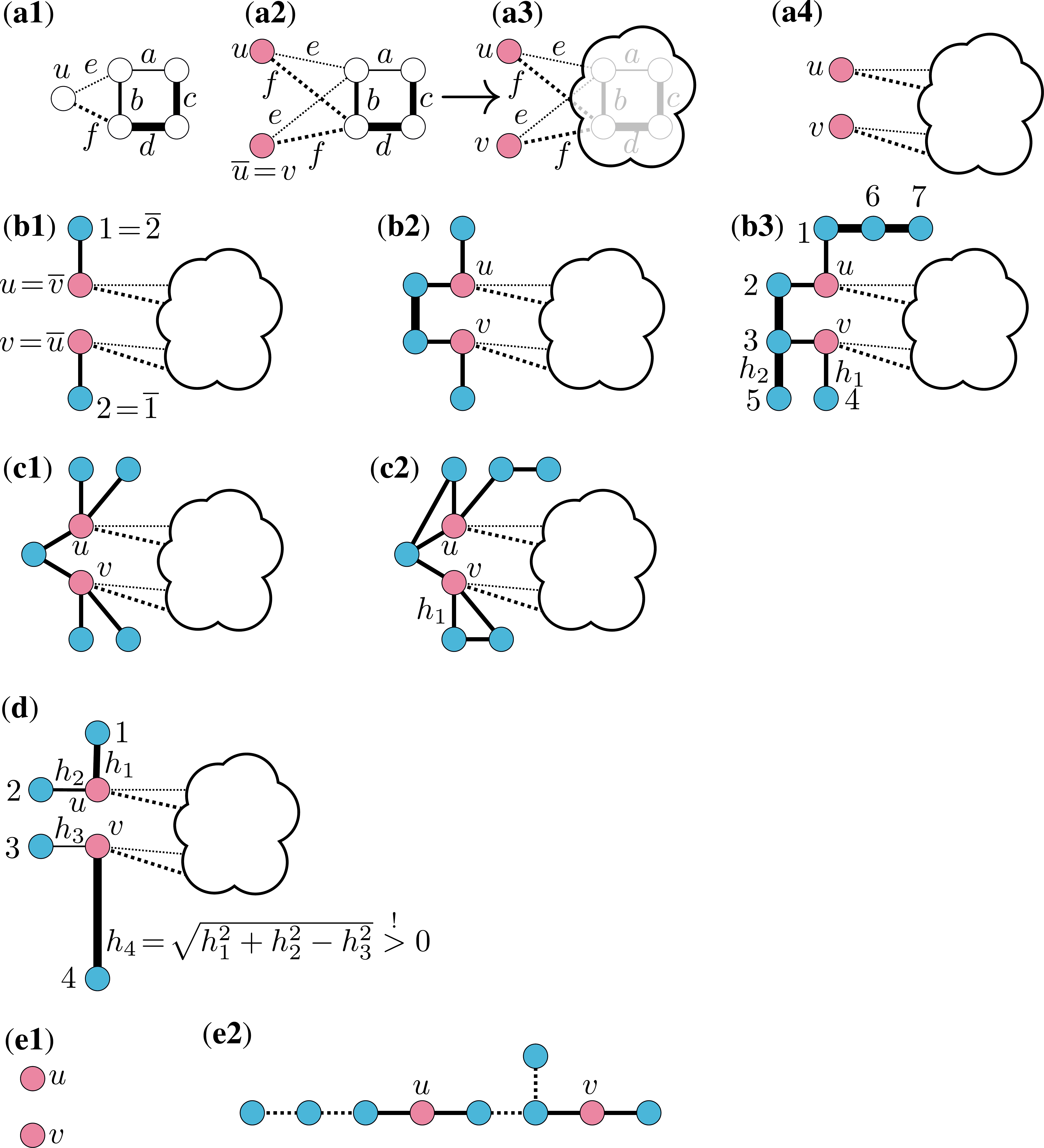}
	\caption{\textbf{(a1 -- a2)}: Creation of a graph [as the one shown in \textbf{(a2)}] featuring cospectral vertices from an arbitrary graph [as the one shown in \textbf{(a1)}] by symmetrization of the site $u$. In \textbf{(a3)}, we divide the given graph featuring cospectral sites $u$ and $v$ into two parts: The two red sites denoting $u,v$, and the remainder of the graph, denoted by a cloud.
		In the remainder of this figure, the combination of the two red sites with a cloud denote an arbitrary subsystem featuring cospectral vertices $u$ and $v$. In \textbf{(a4 -- e2)}, we present a number of operations on such a general graph which preserve the cospectrality. The validity of these operations is proven in \cref{app:CospectralityProofs}.
	}
	\label{fig:cospectralityDesign}
\end{figure*}

Let us now briefly recapitulate the above. We have seen that a necessary condition for PGST from $u$ to $v$ is strong cospectrality of these two vertices. This implies that $u$ and $v$ are cospectral, and we demonstrated that whether $u$ and $v$ are cospectral can be easily determined by testing whether $(\hamil^{k})_{u,u} = (\hamil^{k})_{v,v} \; \forall\; k < N$.
We then showed how these matrix entries can in turn be determined by summing up the respective weights of all possible walks of length $k$ from sites $u$ and $v$ to themselves.
In the following, we will show how one can \emph{design} graphs featuring strongly cospectral vertices $u$ and $v$.
We will start with a simpler problem, namely, the design of graphs with cospectral vertices $u$ and $v$, and then show how strong cospectrality can be achieved.

A convenient way to create a graph with cospectral vertices is to take any graph, and replicate and symmetrize one of its sites, $u$, as shown in \cref{fig:cospectralityDesign} (\textbf{a1} -- \textbf{a4}). This symmetrization then automatically yields the cospectrality of site $u$ and its symmetry partner $\overline{u} = v$ (see figure). This is due to the fact that the underlying Hamiltonian is invariant under an exchange of $u$ and $v$, which can easily be shown to imply cospectrality of these vertices.

Having seen how a graph can be changed to feature cospectral vertices $u$ and $v$, let us now show some modifications of this graph which \emph{keep the cospectrality}.
The procedure is shown in \cref{fig:cospectralityDesign} \textbf{(b1 -- e2)}, but let us elaborate more on its subfigures \textbf{(a1 -- a3)} first.
\Cref{fig:cospectralityDesign} (\textbf{a1} - \textbf{a2}) show the above symmetrization procedure for a simple example setup of five sites.
In \cref{fig:cospectralityDesign} \textbf{(a3)}, the logic underlying the subfigures \textbf{(b1 -- e2)} is shown.
Namely, the cloud incorporates a subgraph which must be chosen such that the composite graph, consisting of this subgraph and the two red vertices, features cospectrality of these two red vertices. This subgraph can consist of the four vertices as shown in \cref{fig:cospectralityDesign} \textbf{(a3)}, but can likewise be an arbitrarily complicated structure as long as the two red vertices are cospectral in the composite structure.
In \cref{fig:cospectralityDesign} \textbf{(a4)}, we show a first composite structure.
We then change it by a series of modifications in \cref{fig:cospectralityDesign} \textbf{(b1 -- b3)}, \cref{fig:cospectralityDesign} \textbf{(c1 -- c2)}, \cref{fig:cospectralityDesign} \textbf{(d)} and \cref{fig:cospectralityDesign} \textbf{(e1 -- e2)}.
Each of these modifications keeps the cospectrality of the two red vertices, as we prove in \cref{app:CospectralityProofs}.

These modifications can be divided into two classes: Those where the subsystem $\hamil_{\text{BR}}$ consisting of the red and blue sites is reflection symmetric about the horizontal axis, so that its sites $i$ are transformed as $i \leftrightarrow \overline{i}$ [compare \cref{fig:cospectralityDesign} \textbf{(b1)}] and those where there is no such symmetry.
The graphs shown in \cref{fig:cospectralityDesign} \textbf{(b1,b2,c1,e1)} belong to the first class.
The principle underlying this class of modifications is that, under the symmetry operation of a reflection of $\hamil_{\text{BR}}$ about the horizontal axis, the two sites $u$ and $v$ are mapped onto each other.
Due to this symmetry, they are trivially cospectral within $\hamil_{\text{BR}}$.
As we show in \cref{app:CospectralityProofs}, their cospectrality is preserved also in the full system, where $\hamil_{\text{BR}}$ is connected to the cloud.
The graphs shown in \cref{fig:cospectralityDesign} \textbf{(b3,c2,d,e2)} belong to the second class.
In these setups, the corresponding subgraph $\hamil_{\text{BR}}$ is no longer symmetric at all.
Nevertheless, the sites $u$ and $v$ are cospectral, and their cospectrality can readily be understood by evaluating the powers of $\hamil_{\text{BR}}$, as was done in \cref{sec:cospectrality}.

We thus demonstrate a set of examples which allow to design a variety of cospectral graphs from simpler structures such as the one shown in \cref{fig:cospectralityDesign} \textbf{(a2)}.
In \cref{fig:cospectralityDesign} (\textbf{e1 -- e2}), we apply the modifications done in \cref{fig:cospectralityDesign} \textbf{(b1 -- b3)} to two isolated sites (the red vertices).
We thereby create the iconic graph that is shown in the first paper on cospectral vertices by Schwenk \cite{Schwenk1973PTAAC257AlmostAllTreesAre} and is also depicted in many publications related to cospectrality, for example in Refs. \cite{Godsil2017AMStronglyCospectralVertices,Kempton2019AMCharacterizingCospectralVerticesIsospectral,Eisenberg2018DMPrettyGoodQuantumState}.

We stress that the operations presented above are certainly only a subset of valid modifications that keep cospectrality.
As we noted above, the study of cospectral vertices is still an emerging field, and we expect that there will be more construction principles found in the future.
To help the reader and to spread the understanding of graphs with cospectral vertices, we developed a graphical MATLAB tool that allows to design graphs and check for cospectrality of vertices in an intuitive and fast way.
This tool is available upon request from the authors.

Let us now come back to a statement about local symmetries, made in the last paragraph of \cref{sec:cospecSymmetries}.
There, we stated that local symmetries often occur naturally during the process of designing Hamiltonians featuring cospectral sites.
We can now support this statement by looking at the graphs depicted in \cref{fig:cospectralityDesign} \textbf{(b3)} and \textbf{(c2)}, both of which feature local symmetries.
In \cref{fig:cospectralityDesign} \textbf{(b3)}, the subgraph consisting of the sites $u,v,1,2,3,4$ is invariant under the permutation $1 \leftrightarrow 4, 2 \leftrightarrow 3, u \leftrightarrow v$.
This local symmetry is caused by the way \cref{fig:cospectralityDesign} \textbf{(b3)} was constructed.
Namely, by first making symmetric changes [performed in \cref{fig:cospectralityDesign} \textbf{(b1 -- b2)}] to the initial setup of \cref{fig:cospectralityDesign} \textbf{(a4)}, and breaking them afterwards by performing another change, as done in \cref{fig:cospectralityDesign} \textbf{(b3)}.
The underlying symmetries present in \cref{fig:cospectralityDesign} \textbf{(b1 -- b2)}] are then rendered to be local symmetries in \cref{fig:cospectralityDesign} \textbf{(b3)}].
A similar reasoning can be done for the setup depicted in \cref{fig:cospectralityDesign} \textbf{(c2)}.
Local symmetries can also occur accidentally, as we now show.
To this end, we note that the graph in \cref{fig:cospectralityDesign} \textbf{(d)} was designed on purpose such that $u$ and $v$ are cospectral for \emph{arbitrary} $h_{1},h_{2},h_{3},h_{4} > 0$.
In particular, the $h_{i}$ can be chosen asymmetrically, i.e., such that no any two couplings are identical.
Yet, as a byproduct of this construction that aims at an asymmetric graph, the on-site potentials of the sites $1$ to $4$ must all have the same value in order to maintain cospectrality of $u$ and $v$ for arbitrary $h_{1},h_{2},h_{3},h_{4}$.
As a result, the subgraph of sites $1,2,3,4$ is invariant under the cyclic permutation $1 \rightarrow 2 \rightarrow 3 \rightarrow 4 \rightarrow 1$, representing an accidental local symmetry caused by cospectrality.

Having shown a method to create graphs featuring cospectral vertices, let us now comment on how these can be modified to achieve strong cospectrality.
To this end, let us analyze \cref{eq:cospectralProjectors,eq:StronglyCospectralProjectors} which describe the conditions for cospectrality and strong cospectrality, respectively.
From these two equations, it follows that whenever a Hamiltonian $\hamil$ features two sites $u$ and $v$ which are cospectral but \emph{not} strongly cospectral, $\hamil$ must have degenerate eigenvalues.
One can thus achieve strong cospectrality of $u$ and $v$ by suitably modifying $\hamil$ such that (i) the cospectrality of $u$ and $v$ is kept and (ii) the spectrum of $\hamil$ becomes non-degenerate.
In other words, if we let $\hamil(\coParam)$ denote a Hamiltonian with cospectral sites $u$ and $v$ for a set of $N$ parameters $\coParam \subseteq \mathbb{R}^{N}$ describing couplings and on-site potentials occurring in $\hamil(\coParam)$, we look for subspaces $\coParam' \subseteq \coParam$ in which $\hamil(\coParam')$ is non-degenerate.
For the graph shown in \cref{fig:latentSymmetriesParameters} \textbf{(c)}, we have a $9$-dimensional parameter space
\begin{equation*}
	\coParam = \{(a,b,c,d,e,f,E_{\text{red}},E_{\text{blue}},E_{\text{green}}) \in \mathbb{R} \}
\end{equation*}
where $E_{\text{red}},E_{\text{blue}},E_{\text{green}}$ denote the on-site potentials of the red, blue and green sites, respectively.
For a setup designed using the procedure demonstrated in this section, the parameter space $\coParam$ can be obtained as follows.
\begin{enumerate}
	\item Parametrize the couplings and on-site potentials occurring in $ \hamil_{\text{cl}} = \hamil_{\text{cl}}(\coParam_{\text{cl}})$, where $\hamil_{\text{cl}}$ denotes the Hamiltonian describing the isolated cloud.
	For the graph depicted in \cref{fig:cospectralityDesign} \textbf{(a3)}, we have
	\begin{equation}
	\coParam_{\text{cl}} = \{(a,b,c,d,E_{\text{white}}) \in \mathbb{R} \}.
	\end{equation}
	where $E_{\text{white}}$ denotes the on-site potential of the white sites.
	\item Denote by $\coParam_{\text{coupl}}$ the parameter space for the symmetrized couplings from $\hamil_{\text{cl}}$ to the sites $u$ and $v$. For \cref{fig:cospectralityDesign} \textbf{(a3)}, we have
	\begin{equation}
	\coParam_{\text{coupl}} = \{ (e,f) \in \mathbb{R} \} .
	\end{equation}
	\item Constrain the couplings and on-site potentials occurring in $\hamil_{\overline{\text{cl}}} = \hamil_{\overline{\text{cl}}}(\coParam_{\overline{\text{cl}}})$ such that $u$ and $v$ are cospectral within $\hamil_{\overline{\text{cl}}}(\coParam_{\overline{\text{cl}}})$.
	Here $\hamil_{\overline{\text{cl}}}$ denotes the Hamiltonian describing the setup \emph{without} the cloud.
	For graphs designed using \cref{fig:cospectralityDesign}, we explicitly have
	\begin{align*}
	\coParam_{\overline{\text{cl}}}  = \begin{cases}
	\{E_r \in \mathbb{R} \} & \text{subfig. \textbf{(a3)}} \\
	\{(E_r,E_b, h_{1},h_{2}) \in \mathbb{R} \} & \text{subfig. \textbf{(b3)}} \\
	\{(E_r,E_b, h_{1}) \in \mathbb{R} \} & \text{subfig. \textbf{(c2)}} \\
	\{(E_r,E_b, h_{1},h_{2},h_{3}) \in \mathbb{R}: h_{4} > 0\} & \text{subfig. \textbf{(d)}} 
	\end{cases}
	\end{align*}
	where $E_r,E_b$ denote the on-site potentials of the red and blue sites, respectively.
	\item Construct $\coParam$ from $\coParam_{\text{cl}},\coParam_{\text{coupl}}$ and $\coParam_{\overline{\text{cl}}}$ as
	\begin{equation}
	\coParam = \coParam_{\text{cl}} \cup \coParam_{\text{coupl}} \cup \coParam_{\overline{\text{cl}}}
	\end{equation}
	so that the dimension of $\coParam$ is equal to the sum of dimensions of $\coParam_{\text{cl}},\coParam_{\text{coupl}}$ and $\coParam_{\overline{\text{cl}}}$.
	For \cref{fig:cospectralityDesign} \textbf{(a3)}, we yield
	\begin{equation*}
	\coParam = \{(a,b,c,d,e,f,E_{\text{white}},E_r) \in \mathbb{R} \} .
	\end{equation*}
\end{enumerate}
\subsection{Relating the spectral condition to minimal polynomials} \label{sec:relationMinimalPolynomials}

As explained in \cref{sec:necAndSufCond}, PGST between $u$ and $v$ happens if and only if $u$ and $v$ are strongly cospectral and the spectrum meets the conditions \cref{eq:PGSTCond1,eq:PGSTCond2,eq:PGSTCond3}.
In the previous section we showed that designing a strongly cospectral graph is straightforward. On the other hand, meeting the spectral requirements remains a difficult task.
Nevertheless, in a recent paper \cite{Eisenberg2018DMPrettyGoodQuantumState} by Eisenberg et al., this has been rendered simpler for the case of PGST.
They showed that \cref{eq:PGSTCond1,eq:PGSTCond2,eq:PGSTCond3} are \emph{automatically} fulfilled, provided that the polynomials $P_{\pm}$ (defined below) are irreducible over the base field $F$ (which contains all the entries of $\hamil$) and fulfill
\begin{equation} \label{eq:PGSTTraceCondition}
\frac{Tr(P_{+})}{deg(P_{+})} \ne \frac{Tr(P_{-})}{deg(P_{-})}
\end{equation}
where $Tr(P_{\pm})$ denote the sum of roots of $P_{\pm}$, and $deg(P_{\pm})$ denote their respective degree.
The polynomials $P_{\pm}$ stem from a decomposition of the characteristic polynomial of $\hamil$.
More specifically, given a Hamiltonian $\hamil$ with two strongly cospectral sites $u$ and $v$, its characteristic polynomial $P$ can be decomposed \cite{Eisenberg2018DMPrettyGoodQuantumState} as
\begin{equation} \label{eq:decompositionOfPolynomial}
	P = P_{0} \cdot P_{+} \cdot P_{-},
\end{equation}
such that $P_{+}$ and $P_{-}$ have no multiple roots, do not share any roots, and where the polynomials $P_{\pm}$ are related to eigenvectors of $\hamil$ which are (i) non-vanishing on sites $u$ and $v$ and (ii) are of positive/negative parity on these sites, respectively.
Each root of $P_{0}$ with multiplicity $k$ is related to exactly $k$ eigenvectors of $\hamil$, all of which have vanishing amplitudes on $u$ and $v$.
The problem of fulfilling the spectral condition for PGST thus boils down to tuning the polynomials $P_{\pm}$ accordingly.
There are two possible routes to achieve this, an indirect and a direct one.
In the indirect route, the properties of the polynomials $P_{\pm}$ are controlled by applying certain changes to the underlying Hamiltonian that cause $P_{\pm}$ to be irreducible over $F$ and meet \cref{eq:PGSTTraceCondition}, but $P_{\pm}$ are not directly known.
Such a method has been presented in Ref. \cite{Eisenberg2018DMPrettyGoodQuantumState}, where several such mechanisms have been shown.
In particular, the method shown there starts from a graph with cospectral vertices and selectively adds transcendental numbers to some diagonal entries of $\hamil$, such that the modified setup features PGST.
While elegant and powerful, indirect methods do not provide explicit forms of the polynomials $P_{\pm}$. This limits the ability to understand under which circumstances the underyling setup might feature PGST.

In cases where $\hamil$ features an involutory symmetry $\sigma$, i.e., $[H,\sigma] = 0$ with $\sigma^2 = I$, the Hamiltonian can be block-diagonalized \cite{Kempton2017QIP16210PrettyGoodQuantumState} to obtain $P_{\pm}$.
An example for such an involutory symmetry is any permutation that does only pairwise permutations of two indices, such as $\mathbb{S}: 1 \leftrightarrow 2$ (acting as the identity on indices $3$ and $4$) for the graph in \cref{fig:WalkInterpretation}.
Unfortunately, this approach is not applicable to setups that do not invoke such involutory symmetries, or where their form is unknown, such as all graphs in \cref{fig:latentSymmetriesParameters}.

In the following, we will present a new method to create PGST that relies on the recently introduced isospectral reduction of $\hamil$. This method and the transfer of compact localized states, as presented in \cref{sec:CLSTransfer}, are the two highlights of our work.
Once the polynomials are obtained, proper tuning of parameters allows to meet the requirements for PGST.

\subsubsection{Isospectral reductions} \label{sec:isospectralReductions}
\begin{figure} 
	\centering
	\includegraphics[max size={\columnwidth}{\textheight}]{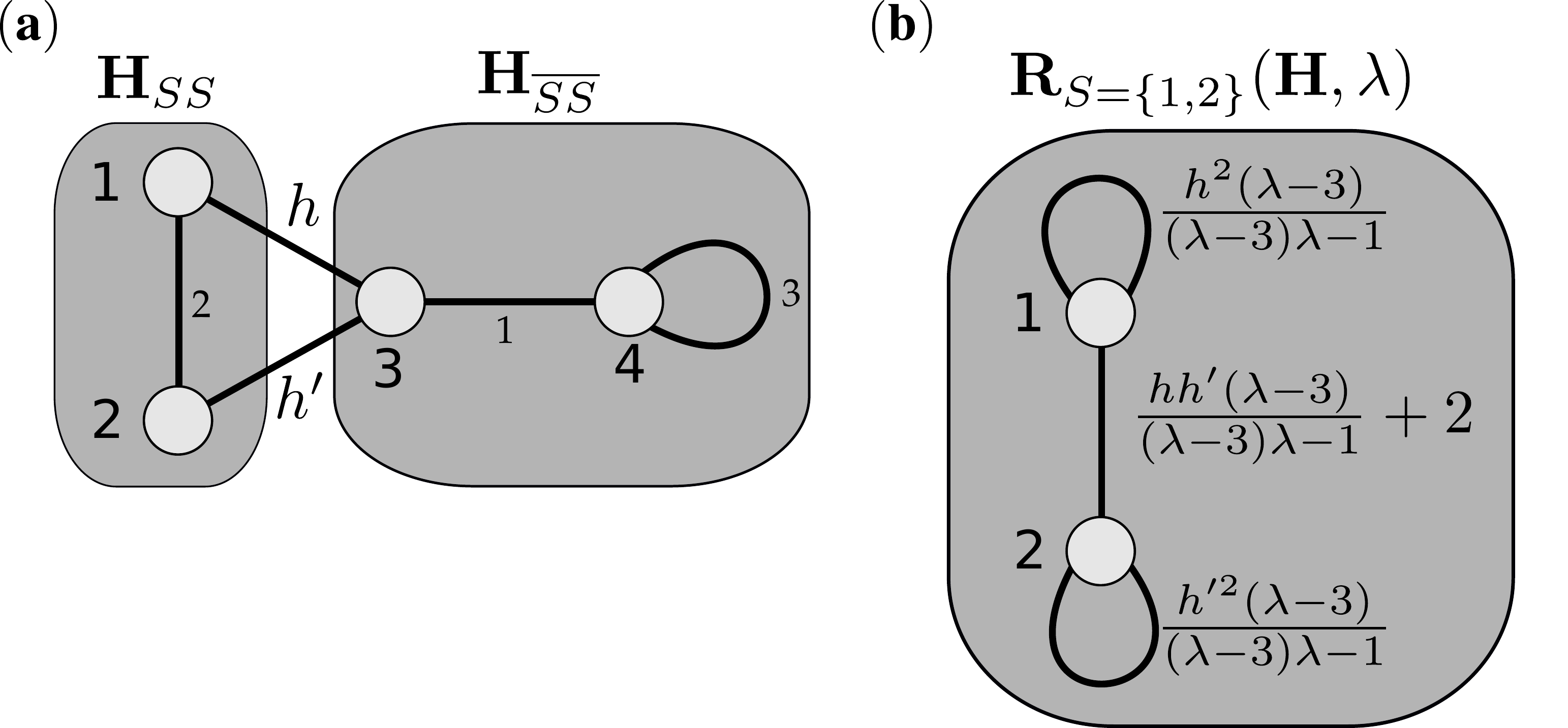}
	\caption{\textbf{(a)} Original graph, and its decomposition into $\hamil_{SS}$ and $\hamil_{\overline{SS}}$, where $S= \{1,2\}$. \textbf{(b)} The isospectral $\mathbf{R}_{S}(\hamil,\lambda)$ reduction of the graph of \textbf{(a)} over $S$.
	}
	\label{fig:isospectralReduction}
\end{figure}

We first provide some key aspects of isospectral reductions \cite{Kempton2019AMCharacterizingCospectralVerticesIsospectral,Bunimovich2011N25211IsospectralGraphTransformationsSpectral,Bunimovich2012C22033118IsospectralCompressionOtherUseful,Bunimovich2014IsospectralTransformationsNewApproach,Duarte2015LAIA474110EigenvectorsIsospectralGraphTransformations,VasquezFernandoGuevara2014NLAwA22145PseudospectraIsospectrallyReducedMatrices}, introduced first by Bunimovich and Webb \cite{Bunimovich2011N25211IsospectralGraphTransformationsSpectral}.
This concept will allow us to extract the polynomials $P_{\pm}$. Our explanations will be accompanied by the illustration in \cref{fig:isospectralReduction}.

The basic idea of isospectral reductions is to reduce the dimension of a given matrix Hamiltonian $\hamil$ by certain transformations specified by a set of sites $S$, yielding a smaller matrix $\mathcal{\mathbf{R}}_{S}(\hamil,\lambda)$ dependent on a parameter $\lambda$, which carries the same or almost the same spectral information as the original matrix $\hamil$. Among others, the benefit of such a reduction lies in a reduction of complexity. For this reason, the isospectral reduction has been invented in the context of network analysis, where the sheer size of the investigated networks often complicates their treatment.
Let us now define the isospectral reduction of a given matrix $\hamil$.
This reduction $\mathcal{\mathbf{R}}_{S}(H,\lambda)$ is done over the set of sites $S$, so that
\begin{equation} \label{eq:isospectralReduction}
\mathcal{\mathbf{R}}_{S}(\hamil,\lambda) = \hamil_{SS} - \hamil_{S\overline{S}}\left(\hamil_{\overline{S}\overline{S}} - \mathbf{I} \lambda \right)^{-1} \hamil_{\overline{S}S}
\end{equation}
and is defined for all values of $\lambda$ that are not eigenvalues of $\hamil_{\overline{S}\overline{S}}$, where $\overline{S}$ denotes the complement of the set of vertices $S$.
$\hamil_{SS}$ and $\hamil_{\overline{SS}}$ denote two subsystems of $\hamil$, obtained from $\hamil$ by deleting all sites in $\overline{S}$ or $S$, respectively. $\hamil_{S\overline{S}} = (\hamil_{\overline{S}S})^{T}$ are the submatrices which couple $\hamil_{SS}$ to $\hamil_{\overline{SS}}$ and $\hamil_{\overline{SS}}$ to $\hamil_{SS}$, respectively.
The dimension of $\mathcal{\mathbf{R}}_{S}(\hamil,\lambda)$ is given by $|S|$, i.e., the number of sites over which $\hamil$ is isospectrally reduced.
Such a decomposition is shown in \cref{fig:isospectralReduction} \textbf{(a)}.
In \cref{fig:isospectralReduction} \textbf{(b)}, we then show the isospectral reduction of the graph in \cref{fig:isospectralReduction} \textbf{(a)} over the sites $S = \{1,2\}$.

A major goal of the isospectral reduction is to reduce the size of the problem, whilst maintaining (almost) all of its spectral features. It may seem that such a reduction is impossible, since, by the fundamental theorem of algebra, a hermitian matrix $\hamil \in \mathbb{C}^{N \times{}N}$ has exactly $N$ eigenvalues. A reduced version $\hamil' \in \mathbb{C}^{|S| \times{} |S|},\; |S| < N$ would, therefore, inevitably have $N - |S|$ less eigenvalues.
However, the above is not necessarily true anymore if the entries of $\hamil'$ are not just constant real or complex numbers, but rational functions of a parameter $\lambda$.
This is the case for $\mathcal{\mathbf{R}}_{S}(\hamil,\lambda)$, as can be seen for example in \cref{fig:isospectralReduction} \textbf{(b)}.
This change in the nature of matrix entries also slightly alters the definition of eigenvalues of the isospectrally reduced matrix $\mathcal{\mathbf{R}}_{S}(\hamil,\lambda)$ compared to that of matrices with constant entries.
While the eigenvalues $\lambda_{i}$ of such a matrix $\hamil$ fulfill
\begin{equation*}
	det\Big(\hamil - \mathbf{I} \lambda_{i}  \Big) = 0,
\end{equation*}
the eigenvalues $\lambda_{i}$ of $\mathcal{\mathbf{R}}_{S}(\hamil,\lambda)$ fulfill
\begin{equation} \label{eq:isospectralEigenvalues}
det\Big(\mathcal{\mathbf{R}}_{S}(\hamil,\lambda_{i}) - \mathbf{I} \lambda_{i} \Big) = 0 .
\end{equation}
It can then be shown that the set of eigenvalues of $\mathcal{\mathbf{R}}_{S}(\hamil,\lambda)$ contains all eigenvalues of $\hamil$, except those which are also eigenvalues of $\hamil_{\overline{S}\overline{S}}$.
Thus, if $\hamil$ and $\hamil_{\overline{S}\overline{S}}$ do not share any eigenvalues, the spectrum $\sigma(\mathcal{\mathbf{R}}_{S}(\hamil,\lambda))$ is identical with that of $\hamil$, i.e., $\sigma(\mathcal{\mathbf{R}}_{S}(\hamil,\lambda)) = \sigma(\hamil)$, as desired.

Similar to the definition of eigenvalues of $\mathcal{\mathbf{R}}_{S}(\hamil,\lambda)$, as done in \cref{eq:isospectralEigenvalues}, it is also possible to generalize the concept of \emph{eigenvectors} to isospectral reductions $\mathcal{\mathbf{R}}_{S}(\hamil,\lambda)$.
These eigenvectors $\{\mathbf{v}_{1},\ldots{},\mathbf{v}_{n} \}$ of $\mathcal{\mathbf{R}}_{S}(\hamil,\lambda)$, where $n$ is the number of eigenvalues of $\mathcal{\mathbf{R}}_{S}(\hamil,\lambda)$, fulfill
\begin{equation*}
(\mathcal{\mathbf{R}}_{S}(\hamil,\lambda_{i}) - \mathbf{I} \lambda_{i}) \mathbf{v}_{i} = 0 .
\end{equation*}
Contrary to eigenvectors of the symmetric matrix $\hamil$, the set $\{\mathbf{v}_{i} \}$ does not need to be pairwise orthogonal, and could even be linearly dependent or pairwise identical.
Their importance stems from the fact that they can be linked \cite{Bunimovich2018LAIA551104GeneralizedEigenvectorsIsospectralTransformations,Duarte2015LAIA474110EigenvectorsIsospectralGraphTransformations} to the eigenvectors of $\hamil$.
Namely, every eigenvector $\mathbf{v}_{i} \in \mathbb{R}^{|S|\times{} 1}$ of $\mathcal{\mathbf{R}}_{S}(\hamil,\lambda)$ with eigenvalue $\lambda_{i}$ is, up to normalization, the \emph{projection} of the corresponding eigenvector $\mathbf{V}_{i} \in \mathbb{R}^{N \times{} 1}$ of $\hamil \in \mathbb{R}^{N \times{} N}$ onto the sites $S$, i.e., equal to $\big(\mathbf{V}_{i}\big)_{S} \in \mathbb{R}^{|S|\times 1}$, where $\hamil \mathbf{V}_{i} = \lambda_{i} \mathbf{V}_{i}$ and $|S|$ denotes the number of elements in $S$.

\subsubsection{Extracting the polynomials $P_{\pm}$ through isospectral reductions} \label{sec:isospectralExtraction}

In order to use the isospectral reduction to extract the polynomials $P_{\pm}$, let us now investigate the special case of an isospectral reduction over two sites in more detail.
Such a reduction is shown in \cref{fig:isospectralReduction} \textbf{(b)} for the setup shown in \cref{fig:isospectralReduction} \textbf{(a)}.
Inspecting this isospectral reduction $\mathcal{\mathbf{R}}_{S}(\hamil,\lambda)$ in more detail, we see that the respective ``on-site potentials'' $\frac{h^2 (\lambda -3)}{(\lambda -3) \lambda -1}$ and $\frac{h'^2 (\lambda -3)}{(\lambda -3) \lambda -1}$ of sites $1$ and $2$ become equal for all $\lambda$ if and only if $h = \pm h'$. In this case
\begin{equation} \label{eq:bisymmetricExample}
	\mathcal{\mathbf{R}}_{S}(\hamil,\lambda) = \left(
	\begin{array}{cc}
	\frac{h^2 (\lambda -3)}{(\lambda -3) \lambda -1} & 2 \pm \frac{(\lambda -3)
		h^2}{(\lambda -3) \lambda -1} \\
2 \pm	\frac{(\lambda -3) h^2}{(\lambda -3) \lambda -1} & \frac{h^2 (\lambda
		-3)}{(\lambda -3) \lambda -1} \\
	\end{array}
	\right)
\end{equation}
becomes bisymmetric, i.e., symmetric about both the diagonal and the anti-diagonal.
Interestingly, the choice $h' = \pm h$ is also the only one that makes $u$ and $v$ cospectral.
This finding suggests that there might be a connection between the bisymmetry of $\mathcal{\mathbf{R}}_{\{u,v\}}(\hamil,\lambda)$ and cospectrality of $u$ and $v$.
This is indeed the case, as was very recently proven in Ref. \cite{Kempton2019AMCharacterizingCospectralVerticesIsospectral}. For symmetric matrices $\hamil$, the isospectral reduction $\mathcal{\mathbf{R}}_{S}(\hamil,\lambda)$ over two sites $\{u,v\} = S$ is bisymmetric if and only if $u$ and $v$ are cospectral in $\hamil$.
Moreover, $u$ and $v$ are strongly cospectral if and only if they are cospectral and all eigenvalues of $\mathcal{\mathbf{R}}_{S}(\hamil,\lambda)$ are simple.
This theorem is remarkable, as it connects the two seemingly unrelated concepts of cospectrality and isospectral reductions.

To give an intuitive argument for why this theorem makes sense, we show how cospectrality of $u$ and $v$ follows from bisymmetry of $\mathcal{\mathbf{R}}_{S = \{u,v\}}(\hamil,\lambda)$ for the simple case when $\sigma(\hamil) = \sigma(\mathcal{\mathbf{R}}_{S}(\hamil,\lambda))$.
In this case, each eigenvector $\mathbf{v}_{i}$ of $\mathcal{\mathbf{R}}_{S}(\hamil,\lambda)$ is the projection of the corresponding eigenvector $\mathbf{V}_{i}$ of $\hamil$ on the sites $S$.
Now, as can be easily shown, the eigenvectors of $\mathcal{\mathbf{R}}_{S}(\hamil,\lambda)$ have (in the case of degeneracies, can be chosen to have) parity $\pm 1$ on $u$ and $v$ if and only if $\mathcal{\mathbf{R}}_{S}(\hamil,\lambda)$ is bisymmetric.
Therefore, the eigenvectors $\{\mathbf{V}_{i}\}$ of $\hamil$ fulfill \cref{eq:cospectralProjectors}, i.e., sites $u,v$ are cospectral due to the bisymmetry of $\mathcal{\mathbf{R}}_{S}(\hamil,\lambda)$. If, additionally, all eigenvalues of $\mathcal{\mathbf{R}}_{S}(\hamil,\lambda)$ are simple, the $\{\mathbf{V}_{i}\}$ also fulfill \cref{eq:StronglyCospectralProjectors}, i.e., $(\mathbf{V}_{i})_{u} = \pm (\mathbf{V}_{i})_{v}$.

We now use the connection between cospectrality and bisymmetry of $\mathcal{\mathbf{R}}_{S = \{u,v\}}(\hamil,\lambda)$ to extract $P_{\pm}$.
To this end, we assume that $u$ and $v$ are strongly cospectral. By theorem 3.8. from Ref. \cite{Kempton2019AMCharacterizingCospectralVerticesIsospectral}, $\mathcal{\mathbf{R}}_{S}(\hamil,\lambda)$ is then bisymmetric, and all its eigenvalues are simple. Due to its bisymmetry, we can parametrize
\begin{equation}
	\mathcal{\mathbf{R}}_{S}(\hamil,\lambda) = \begin{pmatrix}
	A(\lambda) & B(\lambda) \\
	B(\lambda) & A(\lambda)
	\end{pmatrix}
\end{equation}
with $A(\lambda),B(\lambda)$ being rational functions of $\lambda$.
As we have explained above, all eigenvectors of $\mathcal{\mathbf{R}}_{S}(\hamil,\lambda)$ are (in the case of degeneracies, can be chosen to be) of definite parity on $u$ and $v$.
Therefore, the characteristic polynomial $P_{\mathcal{\mathbf{R}}}(\lambda)$ of $\mathcal{\mathbf{R}}_{S}(\hamil,\lambda)$ can be factored into two parts, $P_{\mathcal{\mathbf{R}}} = P_{\mathcal{\mathbf{R}}}^{+}(\lambda) \cdot P_{\mathcal{\mathbf{R}}}^{-}(\lambda)$, such that the roots of the polynomials $P_{\pm}(\lambda)$ are the eigenvalues of eigenvectors of $\mathbf{R}_{S}(\hamil,\lambda)$ with positive and negative parity, respectively.
As we show in \cref{app:ProofOfPPrime}, $P_{\mathcal{\mathbf{R}}}^{\pm} = A(\lambda) \pm B(\lambda) - \lambda$.
They obey the relation
\begin{equation*}
P_{\mathcal{\mathbf{R}}}^{+}(\lambda) \cdot P_{\mathcal{\mathbf{R}}}^{-}(\lambda) = det\Big(\mathcal{\mathbf{R}}_{S}(\hamil,\lambda) - \mathbf{I} \lambda \Big) =  \frac{det(\hamil - \mathbf{I} \lambda)}{det(\hamil_{\overline{SS}} - \mathbf{I} \lambda)}
\end{equation*}
where the first equality is proven in \cref{app:ProofOfPPrime}, and the second on p. $7$ in Ref. \cite{Bunimovich2014IsospectralTransformationsNewApproach}.

There are now two possible scenarios for which the polynomials $P_{\pm}(\lambda)$ can be obtained.
In the first scenario, $\hamil$ and $\hamil_{\overline{SS}}$ must not share any eigenvalues.
In that case, all eigenvalues of $\hamil$ are given by the union of roots of $P_{\mathcal{\mathbf{R}}}^{\pm}(\lambda)$, and by the above assumption of strong cospectrality of $u$ and $v$, all these eigenvalues are non-degenerate.
Combining these properties, we see that \emph{all} eigenvectors of $\hamil$ do not vanish on the sites $u$ and $v$, and the corresponding amplitudes on these two sites are of definite parity w.r.t. exchanging $u$ and $v$.
Thus, $P_{0}(\lambda) = 1$ [from the decomposition of the characteristic polynomial of $\hamil$, as done in \cref{eq:decompositionOfPolynomial}], and we get
\begin{equation*}
	P(\lambda) = P_{+}(\lambda) \cdot P_{-}(\lambda)
\end{equation*}
where $P(\lambda) = det(\hamil - \mathbf{I} \lambda)$ is the characteristic polynomial of $\hamil$.
The $P_{\mathcal{\mathbf{R}}}^{\pm} \equiv p_{\pm}/q_{\pm}$ are rational functions in $\lambda$, so that
\begin{equation} \label{eq:polynomialProduct}
P_{\mathcal{\mathbf{R}}}^{+}(\lambda) \cdot P_{\mathcal{\mathbf{R}}}^{-}(\lambda) =\frac{p_{+}(\lambda)}{q_{+}(\lambda)} \cdot \frac{p_{-}(\lambda)}{q_{-}(\lambda)} = \frac{det(\hamil - \mathbf{I} \lambda)}{det(\hamil_{\overline{SS}} - \mathbf{I} \lambda)}
\end{equation}
where $p_{\pm}(\lambda),q_{\pm}(\lambda)$ and both determinants are polynomials in $\lambda$.
Since
\begin{equation*}
	det(\hamil - \mathbf{I} \lambda) = P(\lambda) = P_{+}(\lambda) \cdot P_{-}(\lambda)
\end{equation*}
it would be ideal if the numerators in \cref{eq:polynomialProduct} match, so that $p_{\pm}(\lambda) = P_{\pm}(\lambda)$.
However, since \cref{eq:polynomialProduct} remains invariant under the transformation
\begin{align} \label{eq:reduceFraction1}
	p_{\pm}(\lambda) &\rightarrow c_{\pm}(\lambda) \cdot p_{\pm}(\lambda) \\
	q_{\pm}(\lambda) &\rightarrow c_{\pm}(\lambda) \cdot q_{\pm}(\lambda) \label{eq:reduceFraction2}
\end{align}
with $c_{\pm}(\lambda)$ functions of $\lambda$, the $p_{\pm}(\lambda)$ are not uniquely determined by \cref{eq:polynomialProduct} alone.
To uniquely determine $P_{\pm}(\lambda)$, one needs to properly reduce the fractions $p_{\pm}(\lambda)/q_{\pm}(\lambda)$ [i.e., performing the transformations of \cref{eq:reduceFraction1,eq:reduceFraction2} with suitable $c_{\pm}(\lambda)$] such that the following conditions are fulfilled.
Firstly, the leading-order coefficients $a_{n_{\pm}}^{(\pm)}$ of the polynomials
\begin{align*}
	p_{+}(\lambda) &= \sum_{n=0}^{n_{+}} a_{n}^{(+)} \lambda^{n} \\
	p_{-}(\lambda) &= \sum_{n=0}^{n_{-}} a_{n}^{(-)} \lambda^{n}
\end{align*}
where $n_{\pm}$ are the respective degrees of $p_{\pm}(\lambda)$, must be chosen such that
\begin{align*}
	a_{n_{+}}^{(+)} &= 1 \\
	a_{n_{-}}^{(-)} &= (-1)^{N}
\end{align*}
where $N$ is the dimension of $\hamil \in \mathbb{R}^{N \times N}$.
This ensures that the product of $p_{+}(\lambda) \cdot p_{-}(\lambda)$ has a leading order coefficient of $(-1)^{N}$, which matches the leading order coefficient of $det(\hamil - \mathbf{I} \lambda)$.
Secondly, the fractions $p_{\pm}(\lambda)/q_{\pm}(\lambda)$ must be \emph{irreducible}.
The latter property means that $p_{+}(\lambda),q_{+}(\lambda)$ [and also $p_{-}(\lambda),q_{-}(\lambda)$] are coprime, i.e., their only common factor is unity.
If the above two conditions are fulfilled, we obtain
\begin{equation*} 
	P_{\pm}(\lambda) = p_{\pm}(\lambda)
\end{equation*}
as desired.

The second scenario where $P_{\pm}(\lambda)$ can be obtained is when all eigenvectors $\mathbf{x'}_{i}$ of $\hamil$ which are related to common eigenvalues $\lambda'_{i}$ of both $\hamil$ and $\hamil_{\overline{SS}}$ vanish on $S$.
The polynomial $P_{0}(\lambda)$ from \cref{eq:decompositionOfPolynomial} then becomes
\begin{equation} \label{eq:PZero}
P_{0}(\lambda) = \prod_{i} (\lambda'_{i} -  \lambda)
\end{equation}
and we can factorize
\begin{align}
	det(\hamil - \mathbf{I} \lambda) &= \Big(\prod_{i} ( \lambda_{i} - \lambda)\Big) \cdot \Big(\prod_{i} ( \lambda'_{i} - \lambda)\Big) \label{eq:DecompositionOfHEigenvalues}, \\
	det(\hamil_{\overline{SS}} - \mathbf{I} \lambda) &= \Big(\prod_{i} ( \lambda'_{i} - \lambda)\Big) \cdot \Big(\prod_{i} ( \lambda''_{i} - \lambda)\Big) \nonumber
\end{align}
where $\lambda''_{i}$ are the eigenvalues of $\hamil_{\overline{SS}}$ which are not simultaneously eigenvalues of $\hamil$.
As a result of \cref{eq:PZero,eq:decompositionOfPolynomial,eq:DecompositionOfHEigenvalues},
\begin{equation*}
	\prod_{i} ( \lambda_{i} - \lambda) = P_{+}(\lambda) \cdot P_{-}(\lambda)
\end{equation*}
and similarly to \cref{eq:polynomialProduct}, we obtain
\begin{equation*}
P_{\mathcal{\mathbf{R}}}^{+}(\lambda) \cdot P_{\mathcal{\mathbf{R}}}^{-}(\lambda)= \frac{p_{+}(\lambda)}{q_{+}(\lambda)} \cdot \frac{p_{-}(\lambda)}{q_{-}(\lambda)} = \frac{\prod_{i}(\lambda_{i} - \lambda ) }{\prod_{i}( \lambda''_{i} - \lambda) }.
\end{equation*}
If the fractions $p_{\pm}(\lambda)/q_{\pm}(\lambda)$ are properly reduced as above, we again have that
\begin{equation*}
P_{\pm}(\lambda) = p_{\pm}(\lambda).
\end{equation*}

The isospectral reduction can thus be used to extract the polynomials $P_{\pm}$ provided that (i) $\hamil$ and $\hamil_{\overline{SS}}$ do not share a common root, or (ii) all common roots of $\hamil$ and $\hamil_{\overline{SS}}$ are related to eigenvectors of $\hamil$ which vanish on $S$.
In the next section, we show how this knowledge can be harnessed to design Hamiltonians featuring PGST.

\section{Application: Designing graphs with pretty good state transfer} \label{sec:Application}
In the previous \cref{sec:MathematicalFoundations}, we have introduced the concept of cospectrality and have shown how, based on the isospectral reduction, the polynomials $P_{\pm}$ can be extracted.
With this theoretical background, one can derive the following algorithm for the design of graphs with PGST.
\begin{enumerate}
	\item \textbf{Achieving cospectrality}\\Design/take a graph $\hamil(\coParam)$ with cospectral vertices $u$ and $v$, e.g., by means of the procedure demonstrated in \cref{sec:designingCospectral}.
	Here, $\coParam$ denotes the parameter space of couplings and on-site potentials occurring in $\hamil$ for which $u$ and $v$ are cospectral and for which there exists at least one possible walk from $u$ to $v$.
	For example, for the graph depicted in \cref{fig:latentSymmetriesParameters} \textbf{(b)}, we have $\coParam = \{(a,b,c,d,E_{\text{red}},E_{\text{blue}}) \in \mathbb{R} \; : \; b \ne 0 \; \text{or}\;  a c \ne 0 \}$, where $E_{\text{red}},E_{\text{blue}}$ denote the on-site potentials of the red and blue sites, respectively.
	\item \textbf{Achieving strong cospectrality}\\Due to cospectrality of $u$ and $v$ for all $\hamil(\coParam)$, the isospectral reduction
	\begin{equation*}
		\mathcal{\mathbf{R}}_{S=\{u,v\}}(\hamil(\coParam),\lambda) = \begin{pmatrix}
		A(\coParam, \lambda) & B(\coParam, \lambda) \\
		B(\coParam, \lambda) & A(\coParam, \lambda)
		\end{pmatrix}
	\end{equation*}
	 of $\hamil(\coParam)$ over $S=\{u,v\}$ [with rational functions $A(\coParam,\lambda),B(\coParam,\lambda)$] is, by Theorem 3.3. from Ref. \cite{Kempton2019AMCharacterizingCospectralVerticesIsospectral}, guaranteed to be bisymmetric.
	 Compute
	 \begin{equation*}
	 	P_{\mathbf{R}}^{\pm}(\coParam,\lambda) = A(\coParam,\lambda) \pm B(\coParam,\lambda) - \lambda \equiv \frac{p_{\pm}(\coParam,\lambda)}{q_{\pm}(\coParam,\lambda)},
	 \end{equation*}
and, by suitable algorithms [see the next \cref{sec:AlgorithmComments}], find a subspace $\coParam' \subseteq \coParam$ for which $P_{\mathbf{R}}^{\pm}(\coParam',\lambda)$ individually have only simple roots, and additionally have no common roots.
By Theorem 3.8 from Ref. \cite{Kempton2019AMCharacterizingCospectralVerticesIsospectral}, $u$ and $v$ are then strongly cospectral.

\item \textbf{Extraction of $P_{\pm}$}\\
By suitable algorithms [see the next \cref{sec:AlgorithmComments}], either
\begin{itemize}
	\item further restrict $\coParam'$ such that $\hamil(\coParam')$ and $\hamil_{\overline{SS}}(\coParam')$ do not share any eigenvalues,
	\item or, alternatively, restrict $\coParam'$ such that all eigenvalues $\lambda'_{i}$ shared by $\hamil(\coParam')$ and $\hamil_{\overline{SS}}(\coParam')$ are related to eigenvectors of $\hamil(\coParam')$ which \emph{vanish on $S$}.
\end{itemize}
In both cases, properly reduce (or expand) the fractions occurring in $P_{\mathbf{R}}^{\pm}(\coParam',\lambda)$ such that
\begin{itemize}
	\item the leading order coefficients of $p_{\pm}(\coParam',\lambda)$ [which are polynomials in $\lambda$] are $+1$ and $(-1)^{N}$, respectively, where $N$ is the dimension of $\hamil \in \mathbb{R}^{N \times N}$,
	\item $p_{\pm}(\coParam',\lambda)/q_{\pm}(\coParam',\lambda)$ are irreducible.
\end{itemize}
As a result $P_{\pm}(\coParam',\lambda) = p_{\pm}(\coParam',\lambda)$.
	\item \textbf{Enforcing pretty good state transfer}\\
	Within the subspace $\coParam'$, search [see the next \cref{sec:AlgorithmComments}] for realizations $\coParam'' \subseteq \coParam'$ such that
	\begin{enumerate}
		\item $P_{\pm}(\coParam'',\lambda)$ are irreducible over the base field $F$ which contains all entries of $\hamil(\coParam'')$.
		\item $\frac{Tr(P_{+}(\coParam'',\lambda))}{deg(P_{+}(\coParam'',\lambda))} \ne \frac{Tr(P_{-}(\coParam'',\lambda))}{deg(P_{-}(\coParam'',\lambda))}$.
	\end{enumerate}
	$\hamil(\coParam'')$ then features PGST from $u$ to $v$.
	We note that $Tr(P_{\pm}(\coParam'',\lambda))$ can be computed without finding the roots of these polynomials, since $Tr(f(x)) = -a_{n-1}/a_{n}$ for a polynomial $f(x) = \sum_{i=0}^{n} a_{i} x^{i}$ of degree $n$.
	\item \textbf{Repetition (if necessary)}\\
	Since not every graph may support PGST, the above procedure is not guaranteed to work in all cases [see the next \cref{sec:AlgorithmComments} for details].
	Thus, if step 4. is not successful, i.e., no parameters $\coParam''$ exist such that $P_{\pm}(\coParam'',\lambda)$ fulfill 4. (a) and (b), go back to step 3. and try its alternative route. If this, again, is not successful, go back to step 1., modify the graph by adding/removing vertices and start anew.
\end{enumerate}

\subsection{Annotations} \label{sec:AlgorithmComments}
Let us now make two comments regarding the above algorithm.
Firstly, the steps 2. to 4. require the search for suitable subspaces, which in general must be performed by means of suitable trial-and-error algorithms.
However, the subspace $\coParam' \subseteq \coParam$ [the search for which is the subject of steps 2. and 3. of the algorithm] can in some cases be given by explicit expressions, as we demonstrate in the next \cref{sec:Example}.
Secondly, not all setups may support PGST, and the above algorithm is therefore not guaranteed to work in all cases.
However, we have successfully tested the algorithm with a variety of setups, and among others, all six graphs depicted in \cref{fig:latentSymmetriesParameters} were successfully tuned to support PGST between the two red sites.

Overall, we stress that the main advantage of our algorithm, compared to existing methods for the design of PGST, is the ability to derive \emph{explicit} forms for the polynomials $P_{\pm}$.
We hope that the insights gained on how to extract the polynomials $P_{\pm}$ will lead to a better understanding on the classes of setups which support PGST.
This understanding is facilitated by the fact that the core method of our approach, the isospectral reduction $\mathcal{\mathbf{R}}_{S=\{u,v\}}(\hamil(\coParam),\lambda)$, can be performed symbolically. As we will see in the next section, in some cases, nearly all steps of the algorithm can be done without numerical evaluations at all.

\subsection{Example} \label{sec:Example}
We now apply the algorithm presented above to a simple example, and will go separately through each of the steps 1. to 4.
\begin{figure} 
	\centering
	\includegraphics[max size={0.6\columnwidth}{\textheight}]{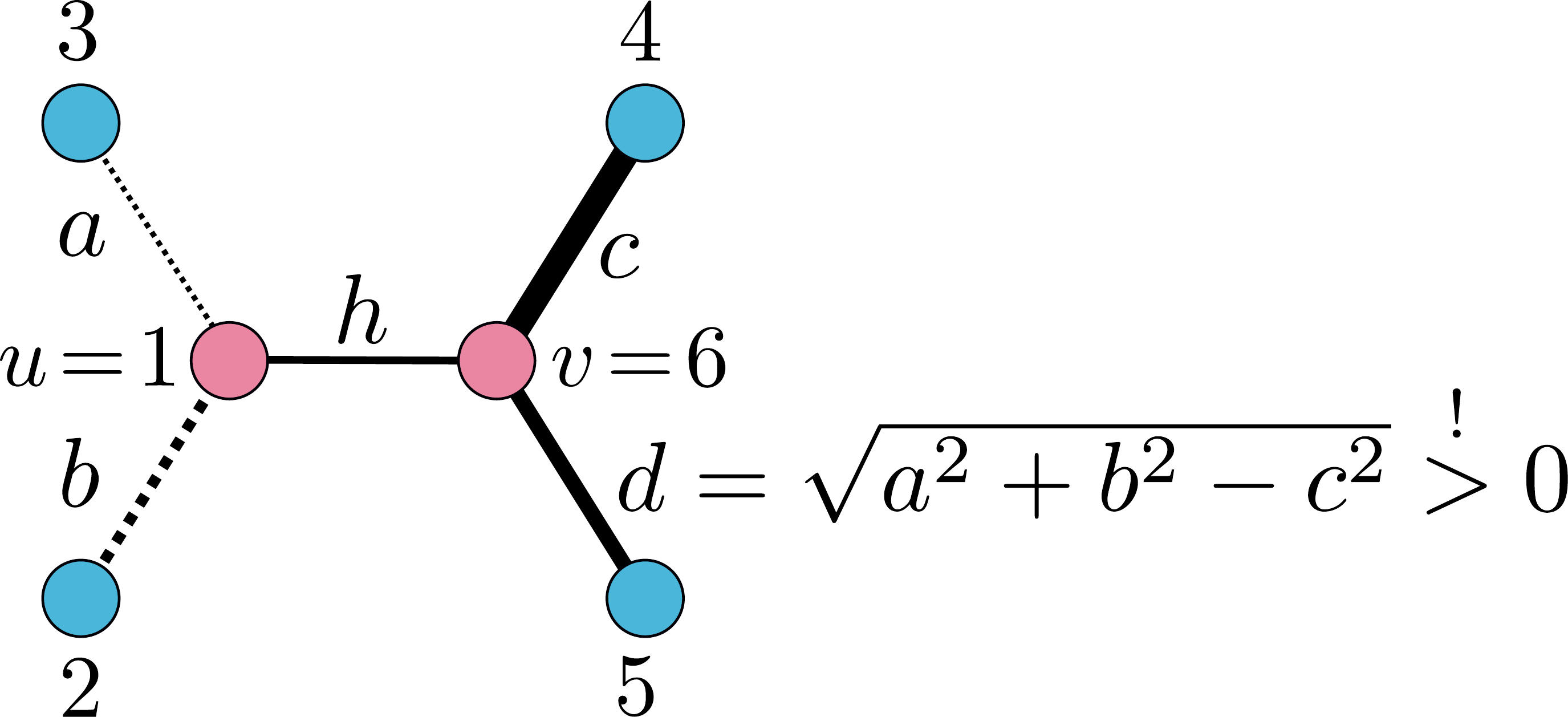}
	\caption{A simple system which can be tuned to feature PGST between sites $u=1$ and $v=6$, as demonstrated in \cref{sec:Example}.
	}
	\label{fig:example1}
\end{figure}

\subsubsection{Achieving cospectrality}
We start the algorithm with the graph shown in \cref{fig:example1}, which represents a very simple graph featuring cospectral vertices $u$ and $v$.
The graph is described by
\begin{equation} \label{eq:example1Hamiltonian}
	\hamil(\coParam) = \left(
	\begin{array}{cccccc}
	E_r & a & b & 0 & 0 & h \\
	a & E_b & 0 & 0 & 0 & 0 \\
	b & 0 & E_b & 0 & 0 & 0 \\
	0 & 0 & 0 & E_b & 0 & c \\
	0 & 0 & 0 & 0 & E_b & d \\
	h & 0 & 0 & c & d & E_r \\
	\end{array}
	\right)
\end{equation}
where $d = \sqrt{a^2 +b^2 -c^2}$,
\begin{equation*}
	\coParam = \{(a,b,c,h,E_{r},E_{b}) \in \mathbb{R}\; :\; d > 0 \; \text{and} (a,b,c,h) \ne 0\},
\end{equation*}
and $E_{b},E_{r}$ denote the on-site potentials of the blue and red sites, respectively.
The sites $u=1$ and $v=6$ are then guaranteed to be cospectral for the Hamiltonian $\hamil(\coParam)$.

The form of $d = \sqrt{a^2 +b^2 -c^2}$ is chosen such as to ensure cospectrality of $u$ and $v$ within a large parameter space. For example, for $a=b=c$, $\hamil$ would be invariant under the exchange $3 \leftrightarrow 4, 2 \leftrightarrow 5, 1 \leftrightarrow 6$, so that $u=1$ and $v=6$ would trivially be cospectral.
However, our choice of $d = \sqrt{a^2 +b^2 -c^2} > 0$ ensures this cospectrality even for asymmetric cases such as $a = 2b = 4c$, where $\hamil$ is not invariant under any non-trivial permutation of sites.

\subsubsection{Achieving strong cospectrality}
The isospectral reduction of $\hamil(\coParam)$ over $S = \{u,v\}$ then gives
\begin{equation*}
	\mathcal{\mathbf{R}}_{S=\{u,v\}}(\hamil(\coParam),\lambda) = \left(
	\begin{array}{cc}
	\frac{\delta }{\lambda -E_b}+E_r & h \\
	h & \frac{\delta }{\lambda -E_b}+E_r \\
	\end{array}
	\right)
\end{equation*}
where $\delta = a^2 + b^2$,
so that
\begin{align*}
	P_{\mathbf{R}}^{+}(\coParam,\lambda) =& \frac{\delta }{\lambda -E_b}+h-\lambda +E_r , \\
	P_{\mathbf{R}}^{-}(\coParam,\lambda) =& \frac{\delta }{\lambda -E_b}-h-\lambda +E_r .
\end{align*}

Following the procedure of the algorithm, we now have to investigate (i) under which circumstances all roots of $P_{\mathbf{R}}^{\pm}(\coParam,\lambda)$ are simple and (ii) under which conditions $P_{\mathbf{R}}^{+}(\coParam,\lambda)$ and $P_{\mathbf{R}}^{-}(\coParam,\lambda)$ do not share any roots. Since
\begin{align} \label{eq:example1Pplus}
	P_{\mathbf{R}}^{+}(\coParam,\lambda) &= \frac{-E_b \left(h-\lambda +E_r\right)+\lambda  (h-\lambda )+\delta +\lambda  E_r}{\lambda -E_b}, \\
	P_{\mathbf{R}}^{-}(\coParam,\lambda) &= \frac{E_b \left(h+\lambda -E_r\right)-\lambda  (h+\lambda )+\delta +\lambda 
		E_r}{\lambda -E_b} \label{eq:example1Pminus}
\end{align}
are rational functions in $\lambda$, we define the corresponding numerators and denominators as $p_{\pm}(\coParam,\lambda)$ and $q_{\pm}(\coParam,\lambda)$. Since the $p_{+}(\coParam,\lambda), q_{+}(\coParam,\lambda)$ and $p_{-}(\coParam,\lambda), q_{-}(\coParam,\lambda)$ could in principle share roots, we need to evaluate when this can happen.
To this end, we can use the so-called \emph{resultant} \cite{WeissteinFMWWRResultantMathWorldAWolframWeb}.
Two given polynomials $f(x)$ and $g(x)$ share at least one root if and only if their resultant $R(f,g)$ is zero. The resultant, defined in terms of the so-called Sylvester-Matrix, can be computed symbolically and is implemented in common computer algebra systems.
For the problem at hand, we yield
\begin{equation}
	R(p_{+}(\coParam,\lambda),q_{+}(\coParam,\lambda)) = R(p_{-}(\coParam,\lambda),q_{-}(\coParam,\lambda)) = \delta
\end{equation}
which can obviously never vanish, since $\delta = a^2 + b^2$ and we demanded that $a,b \in \mathbb{R}$ and $a,b \ne 0$. Thus, we can evaluate the roots of $P_{\mathbf{R}}^{\pm}(\coParam,\lambda)$ by evaluating only their numerators $p_{\pm}(\coParam,\lambda)$.

To check whether $p_{+}(\coParam,\lambda)$ and $p_{-}(\coParam,\lambda)$ share any roots, we again rely on the resultant, which gives
\begin{align*}
R(p{+}(\coParam,\lambda),p_{-}(\coParam,\lambda))&= 4 h^2 \delta > 0.
\end{align*}
Thus, $p_{+}(\coParam,\lambda)$ and $p_{-}(\coParam,\lambda)$ will not share any roots.
We then need to check when $p_{+}(\coParam,\lambda)$ and $p_{-}(\coParam,\lambda)$ individually have multiple roots.
To this end, we compute their so-called \emph{discrimant} \cite{WeissteinPolynomialDiscriminantMathWorldAWolfram}. The discrimant $D\big(f(x)\big)$ of a polynomial $f(x)$ is zero if and only if $f(x)$ has at least one multiple root. Like the resultant, the discriminant can be computed analytically and is implemented in many computer algebra systems.
We then get
\begin{equation*}
D\big(p_{\pm}(\coParam,\lambda) \big) = \left(E_b-E_r\right) \left(E_b\mp 2 h-E_r\right)+h^2+4 \delta .
\end{equation*}
$D\big(p_{\pm}(\coParam,\lambda) \big)$ can only vanish if $\delta =-\frac{1}{4} \left(h \mp v_{v} \pm v_{r}\right){}^2 < 0$, which is again forbidden by our assumptions that $a,b \in \mathbb{R}$ and $a,b \ne 0$.

Let us now recapitulate the above. We have investigated under which conditions all roots of $\mathcal{\mathbf{R}}_{S=\{u,v\}}(\hamil(\coParam),\lambda)$ are simple.
The motivation for this study is the fact that, whenever this is the case, the sites $u$ and $v$ are not only cospectral, but also \emph{strongly cospectral}.
For the Hamiltonian given by \cref{eq:example1Hamiltonian}, we have found that both of the above conditions are fulfilled for all elements in the parameter space $\coParam$, so that $\coParam' = \coParam$, and $u,v$ are always strongly cospectral in this Hamiltonian $\hamil(\coParam)$.
We can thus move on to the third step of our algorithm.

\subsubsection{Extraction of $P_{\pm}$}

Following the procedure of the algorithm, we now have to investigate under which circumstances $\hamil(\coParam')$ and $\hamil_{\overline{SS}}(\coParam')$ share eigenvalues. We therefore compute their resultant
\begin{equation}
	R\big( det(\hamil(\coParam') - \mathbf{I} \lambda), det(\hamil_{\overline{SS}}(\coParam') - \mathbf{I} \lambda) \big) = 0.
\end{equation}
Thus, $\hamil(\coParam')$ and $\hamil_{\overline{SS}}(\coParam')$ always share at least one eigenvalue.
Indeed, closer evaluation shows that, irrespective of how $\coParam'$ is chosen, $\hamil(\coParam')$ and $\hamil_{\overline{SS}}(\coParam')$ share a twofold degenerate eigenvalue $\lambda = E_{b}$, with corresponding (unnormalized) eigenvectors $\mathbf{x}_{1} = (0,1,-b/a,0,0,0)^{T}/\sqrt{4}$ and $\mathbf{x}_{2} = (0,0,0,1,-c/d,0)^{T}/\sqrt{4}$. Both eigenvectors have zero amplitude on the sites $S = \{1,6\}$, and by the reasoning in \cref{sec:isospectralExtraction}, the corresponding doubly degenerate eigenvalue $E_{b}$ of $\hamil(\coParam')$ is not of relevance to us.
To see whether there are any other common eigenvalues of $\hamil(\coParam')$ and $\hamil_{\overline{SS}}(\coParam')$, we investigate the resultant
\begin{equation} \label{eq:reducedResultant}
R\Big(\frac{det(\hamil(\coParam') - \mathbf{I} \lambda)}{(\lambda - E_{b})^2}, \frac{det(\hamil_{\overline{SS}}(\coParam') - \mathbf{I} \lambda)}{(\lambda - E_{b})^2} \Big) = \delta^4.
\end{equation}
Since $\delta = a^2 + b^2 >0$, $\hamil(\coParam'),\hamil_{\overline{SS}}(\coParam')$ do not share any other roots.

To extract $P_{\pm}(\coParam',\lambda)$, we test whether $p_{\pm}(\coParam',\lambda)$, given by the respective numerators of \cref{eq:example1Pminus,eq:example1Pplus}, have leading order coefficients $+1$ and that $p_{\pm}(\coParam',\lambda)/q_{\pm}(\coParam',\lambda)$ are irreducible fractions.
The latter is indeed the case, but the leading order coefficients are $-1$.
Thus, we have $P_{\pm}(\coParam',\lambda) = -p_{\pm}(\coParam',\lambda)$, and explicitly
\begin{align*}
	P_{+}(\coParam',\lambda) &= E_b \left(h-\lambda +E_r\right)-\lambda  (h-\lambda )-\delta -\lambda  E_r, \\
	P_{-}(\coParam',\lambda) &= -E_b \left(h+\lambda -E_r\right)+\lambda  (h+\lambda )-\delta -\lambda E_r .
\end{align*}

\subsubsection{Enforcing pretty good state transfer}
Inserting $P_{\pm}(\coParam',\lambda)$ into \cref{eq:PGSTTraceCondition} and simplifying the resulting inequality yields
\begin{equation}
	\frac{2}{E_b-h+E_r}\neq \frac{2}{E_b+h+E_r}
\end{equation}
which is obviously fulfilled whenever $h\ne 0$. The only task left is to search for realizations $\coParam'' \in \coParam$ which render both $P_{\pm}(\coParam'',\lambda)$ to be irreducible over the base field $F$ which contains all entries of $\hamil(\coParam'')$.
If we choose $(\coParam'')$ such that $\hamil(\coParam'') \in \mathbb{Q}^{6 \times 6}$, we have $F = \mathbb{Q}$, and one realization leading to PGST is
\begin{equation*}
	a=1,b=2,c=1/4, h=1,E_{b} = E_{r} = 0 .
\end{equation*}

\section{Storage and pretty good transfer of compact localized states} \label{sec:CLSTransfer}
\begin{figure} 
	\centering
	\includegraphics[max size={0.6\columnwidth}{\textheight}]{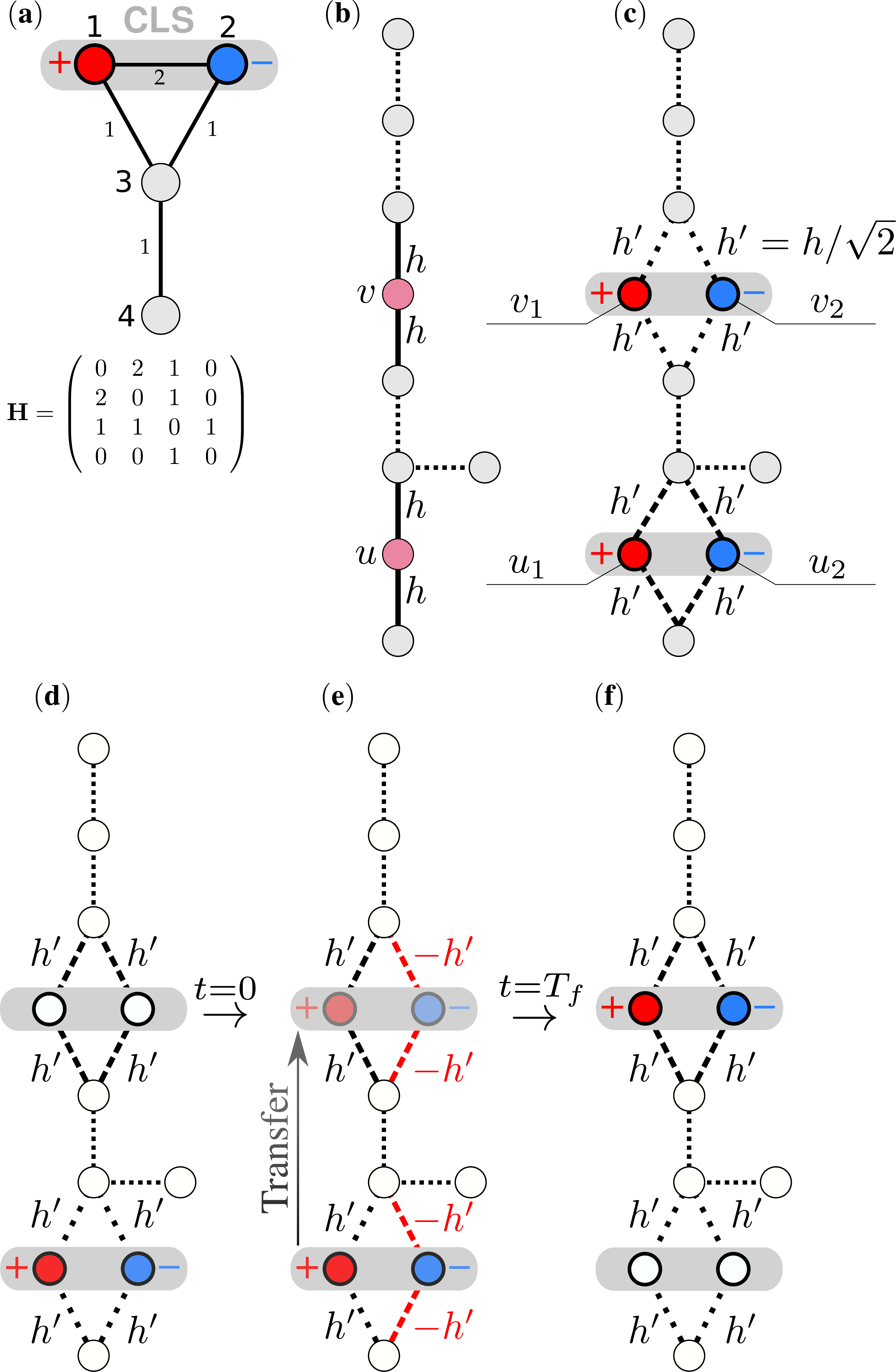}
	\caption{The graph shown in \textbf{(a)} features a compact localizes state (CLS) on the upper two sites. \textbf{(b -- c)} show how a setup featuring PGST of single site excitations $\ket{u}$ and $\ket{v}$ (and without direct coupling between $u$ and $v$) can be equipped with compact localized states by dimerizing sites $u$ and $v$. The setup in \textbf{(b)} is described by $\hamil$, while the setup in \textbf{(c)} is described by $\hamil_{\text{m}}$ (see text for details).
	For each dimer $u_{1,2}$ and $v_{1,2}$, this setup then features one CLS.
	\textbf{(d -- f)} visualize the proposed method of transferring the CLS by performing two quenches at $t=0$ and $t=T_{f}$ (see text for details), so that the three setups are described by $\hamil_{\text{m}}, \hamil'_{\text{m}}$ and again $\hamil_{\text{m}}$, respectively.
	}
	\label{fig:CLSDemonstration}
\end{figure}
\begin{figure} 
	\centering
	\includegraphics[max size={0.55\columnwidth}{\textheight}]{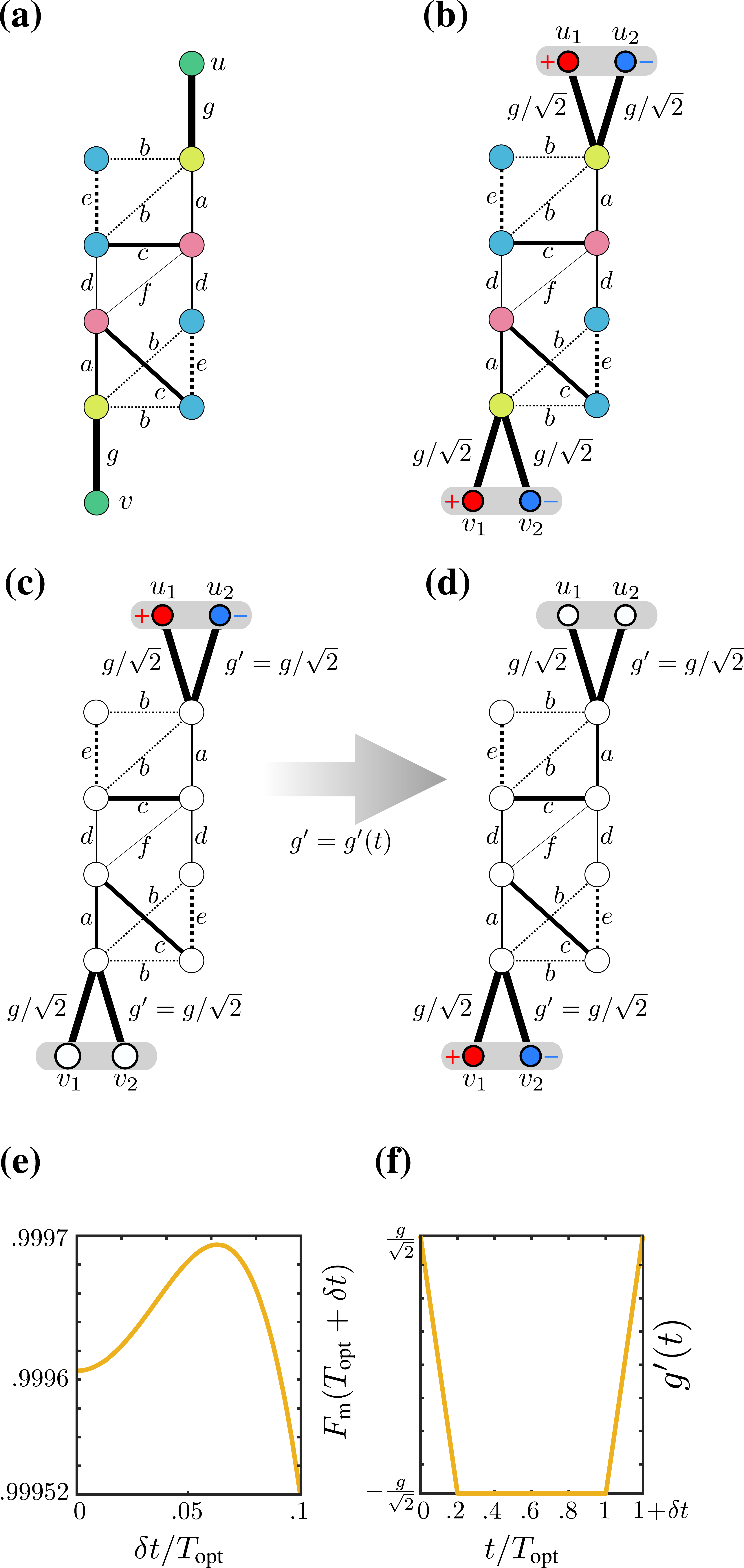}
	\caption{\textbf{(a)} Original setup featuring PGST from $u$ to $v$, described by $\hamil$. \textbf{(b)} Modified setup described by $\hamil_{\text{m}}$, featuring two CLS $\ket{u_{-}} = \frac{\ket{u_{1}} - \ket{u_{2}}}{\sqrt{2}}$ and $\ket{v_{-}} = \frac{\ket{v_{1}} - \ket{v_{2}}}{\sqrt{2}}$, which can be pretty well transferred into each other by performing instantaneous coupling flips (not shown). \textbf{(c -- d)} demonstrates the transfer from $\ket{u_{-}}$ to $\ket{v_{-}}$ by linear ramps of $g'(t)$ with ramping time $\delta t$ [shown in \textbf{(f)} for $\delta t = t_{\text{Opt}}/5$]. The fidelity of this process is shown in \textbf{(e)}.
	}
	\label{fig:CLSTransfer}
\end{figure}

So far, we investigated the transfer of single-site excitations, and showed how networks supporting pretty good transfer of these states can be designed.
In the following, we will demonstrate how such networks can be modified to allow for robust \emph{storage} of qubits.
The need for such modifications arises since, although relatively easy to transfer, single site excitations are difficult to store.
To achieve storage, the underlying sites would need to be completely decoupled from the remainder of the Hamiltonian right after state transfer, which is usually not achievable.
As a consequence, the single site excitation would tunnel to adjacent (weakly) coupled sites, drastically degrading the storage performance.
Recently, a solution to this problem has been proposed in Ref. \cite{Rontgen2018aQuantumNetworkTransferStorage}.
There, qubits were not encoded into excitations of single sites, but into excitations of dimers, which are schematically shown in \cref{fig:CLSDemonstration} \textbf{(a)}, where the dimer consisting of the upper two sites $1$ and $2$ is excited.
Storage in these dimers does not rely on decoupling, but rather on destructive interference.
To achieve such interference, the couplings of the two constituents of the dimer to the remainder of the system are chosen symmetrical, and the dimer-sites are excited with a phase difference of $\pi$.
This completely suppresses any tunneling of this dimer state to its environment. It can be easily proven that such dimer states are eigenstates of the underlying Hamiltonian, and due to their strictly limited spatial extent, they are known as \emph{compact localized states} (CLS).
In \cref{fig:CLSDemonstration} \textbf{(a)}, for example, the CLS is given by $\ket{\Psi_{\text{CLS}}} = \frac{\ket{1} - \ket{2}}{\sqrt{2}}$, and one can easily show that it is an eigenstate of $\hamil$ with eigenvalue $\lambda = -2$. Importantly, this is also an eigenvalue of the \emph{isolated} Hamiltonian of the subsystem
\begin{equation*}
	\hamil_{SS} = \begin{pmatrix}
	0 & 2\\
	2 & 0
	\end{pmatrix}
\end{equation*}
where $S = \{1,2\}$.
The fact that the eigenvalues of CLSs depend only on the subsystem on which they are localized is indeed a general property, and this is just one of the many intriguing features of these states.
Not only do CLSs feature localization without disorder, as is the case for the well-known Anderson localization \cite{Anderson1958PR1091492AbsenceDiffusionCertainRandom}, but they are also strongly connected to the appearance of flat bands.
These are, in turn, conjectured to play a role in the superconduction of cuprates \cite{Leykam2018AP31473052ArtificialFlatBandSystems,Kopnin2011PRB83220503HightemperatureSurfaceSuperconductivityTopological,Iglovikov2014PRB90094506SuperconductingTransitionsFlatbandSystems,Peotta2015NC68944SuperfluidityTopologicallyNontrivialFlat,Julku2016PRL117045303GeometricOriginSuperfluidityLiebLattice,Kobayashi2016PRB94214501SuperconductivityRepulsivelyInteractingFermions,Tovmasyan2016PRB94245149EffectiveTheoryEmergentSU2,Liang2017PRB95024515BandGeometryBerryCurvature}.
We refer the reader interested in the exciting field of CLSs and flat bands to the review \cite{Leykam2018AP31473052ArtificialFlatBandSystems}.

What makes compact localized states important in the context of this work is their unique combination of favorable properties.
The fact that they are eigenstates allows for their perfect, i.e., unity fidelity, storage in idealized model systems, where imperfections can be ignored.
If, on the other hand, such model systems are realized and imperfections are introduced, CLSs profit from the fact that they are localized only on a subdomain of the full system.
This means that they are \emph{immune} to any imperfections of the underlying Hamiltonian outside of this domain and its directly neighboring sites. Morever, the fact that they are localized by means of destructive interference means that they are even immune to certain perturbations inside or directly next to their domain of localization.
For example, the coupling $\hamil_{1,2}$ in \cref{fig:CLSDemonstration} \textbf{(a)} could be varying in time, but would only give an overall time-varying phase on $\ket{\Psi_{\text{CLS}}}$, which would still remain a compactly localized eigenstate of $\hamil(t)$.
Moreover, the couplings $\hamil_{1,3}$ and $\hamil_{2,3}$ could be chosen \emph{arbitrarily big and also time-dependent}, but $\ket{\Psi_{\text{CLS}}}$ would be completely unaffected as long as $\hamil_{1,3}(t) = \hamil_{2,3}(t)$ for all $t$.

The combination of all these properties clearly renders compact localized states ideal candidates for the storage of qubits.
However, the fact that they are eigenstates of $\hamil$ complicates their transfer, which is naturally impossible by simple time-evolution if $\hamil$ is time-independent.
In Ref. \cite{Rontgen2018aQuantumNetworkTransferStorage}, a set of minimal changes to the setup have been demonstrated that allow for both perfect storage and perfect, i.e., unity fidelity, transfer of CLSs in specialized networks.
In this section, we use the underlying idea and show how a network capable of PGST of single-site excitations of sites $u$ and $v$ can be modified by a set of minimal changes such that (i) the network supports compact localized states and (ii) it is possible to perform pretty good transfer of these states.
The only condition on the underlying network is that there are no direct links (edges) between $u$ and $v$.
The basic idea is sketched in \cref{fig:CLSDemonstration} \textbf{(b -- f)}. We start from a Hamiltonian $\hamil$ [as the one depicted in \cref{fig:CLSDemonstration} \textbf{(b)}] which supports PGST from $u$ to $v$, with time-dependent fidelity
\begin{equation}
	F(t) = |\braket{u|exp(i \hamil t)|v}|^2.
\end{equation}
We then modify $\hamil$ such that $u$ and $v$ are replaced by dimers $u_{1,2}$ and $v_{1,2}$, and all couplings of $u,v$ to their environment are replaced by symmetrized and renormalized couplings with the dimer, as shown in \cref{fig:CLSDemonstration} \textbf{(c)}.

The fidelity
\begin{equation} \label{eq:symmetrizedFidelity}
	F'(t) = |\braket{u_{+}|exp(i \hamil_{\text{m}} t)|v_{+}}|^2
\end{equation}
for the transfer of symmetric excitations $\ket{u_{+}} = \frac{\ket{u_{1}} + \ket{u_{2}}}{\sqrt{2}}$ to $\ket{v_{+}} = \frac{\ket{v_{1}} + \ket{v_{2}}}{\sqrt{2}}$ by means of the modified Hamiltonian $\hamil_{\text{m}}$ can then be shown [see \cref{app:CLSTransfer} for details] to be \emph{identical} to $F(t)$.
In particular, while $\hamil$ supports PGST of single site excitations $u$ and $v$, its modified version $\hamil_{\text{m}}$ supports PGST of \emph{symmetric} dimer excitations $\ket{u_{+}}$ and $\ket{v_{+}}$.

On the other hand, $\hamil_{\text{m}}$ supports also two compact localized states, $\ket{u_{-}} = \frac{\ket{u_{1}} - \ket{u_{2}}}{\sqrt{2}}$, and $\ket{v_{-}} = \frac{\ket{v_{1}} - \ket{v_{2}}}{\sqrt{2}}$. They are eigenstates of $\hamil_{\text{m}}$ and can thus not be transferred by simple time evolution. However, by suitable time-dependent modifications, $\ket{u_{-}}$ can be pretty well transferred to $\ket{v_{-}}$, as we show in the following. The procedure is visualized in \cref{fig:CLSDemonstration} \textbf{(d -- f)}.
The main idea is to achieve such a transfer by performing two quenches
\begin{equation*}
	\hamil_{\text{m}} \overset{t=0}{\rightarrow} \hamil'_{\text{m}} \overset{t=T_{f}}{\rightarrow} \hamil_{\text{m}}
\end{equation*}
at $t=0$ and $t=T_{f}$.
The Hamiltonian $\hamil'_{\text{m}}$ is constructed from $\hamil_{\text{m}}$ by instantaneously switching all couplings $J_{i} = h_{i,u_{1}} = h_{i,v_{1}}$ of $u_{1},v_{1}$ (but not of $u_{2},v_{2}$) to their environment as $J_{i} \rightarrow - h_{i}$. Due to this change, the CLS $\ket{u_{-}}$ is no longer an eigenstate of $\hamil'_{\text{m}}$, and thus spreads across the lattice. 
The transfer fidelity during this spreading is given by
\begin{equation}
	F''(t) = |\braket{u_{-}|exp(i \hamil'_{\text{m}} t)|v_{-}}|^2 = F'(t) = F(t) .
\end{equation}
Once $F''(t)$ achieves a sufficiently high value $F''(T_{f}) = 1 - \epsilon$ for given $\epsilon$, the second quench is performed, and the previously modified couplings are instantaneously switched back to their original value.
At $T_{f}$, the state of the system is then given by
\begin{equation}
	\ket{\Psi(T_{f})} = e^{i \phi} \sqrt{1 - \epsilon} \ket{v_{-}} + \sum_{\nu}^{} c_{\nu} \ket{\psi^{\nu}}
\end{equation}
where $\phi$ is a phase and the coefficients $c_{\nu}$ must fulfill $|\braket{\Psi(T_{f})|\Psi(T_{f})}| = 1$.
The states $\ket{v_{-}}$ and $\ket{\psi_{\nu}} \ne \ket{v_{-}}$ are eigenstates of the pre/post quench Hamiltonian $\hamil_{\text{m}}$. Since $\hamil(t) = \hamil_{\text{m}}$ for $t\ge T_{f}$, we have
\begin{equation}
	|\braket{v_{-} | \Psi(t\ge T_{f})}|^2 = F''(T_{f}) = \text{constant}.
\end{equation}
The CLS $\ket{u_{-}}$ is thus stored with the time-independent fidelity $F''(t_{f})$ and, due to its properties, enjoys protection against a large number of imperfections of $\hamil_{\text{m}}$.

In practice, instantaneous coupling flips are rather unrealistic, and may be replaced by more realistic switching pulses.
These will naturally change the transfer fidelity, and the strength of this change clearly depends both on the individual system and the realization of the flipping pulse.
In Ref. \cite{Rontgen2018aQuantumNetworkTransferStorage}, the impact of linear ramps (instead of instantaneous coupling flips) on linear chains that support perfect transfer of compact localized states has been investigated.
As has been shown there for the case of chains of length $N = 5$, even extraordinary slow ramping times of nearly half of the total transfer time only leads to a decrease of the transfer fidelity from unity to $0.97$.
This being said, we now exemplarily investigate the impact of finite duration linear ramps of couplings on the transfer fidelity of the simple example setup shown in \cref{fig:CLSTransfer} \textbf{(a)}. The two green sites $u$ and $v$ are cospectral for any choice of the $11$ parameters $\coParam = \{a,b,c,d,e,E_{red},E_{blue},E_{green},E_{yellow}\} \in \mathbb{R}$.
Before investigating the impact of finite-time ramps on the transfer of compact localized states, we first find the subspace $\coParam'' \subseteq \coParam$ in which the setup supports PGST of single site excitations.
Within this subspace, we then look for realizations $\coParam''' \subseteq \coParam''$ for which the maximum transfer fidelity
\begin{equation*}
	F_{\text{max}}(T_{f}) = \text{max}(F(t\le T_{f}))
\end{equation*}
from site $u$ to $v$
within a given time $T_{f}$ and boundaries on the absolute values of parameters $\coParam''$ is as large as possible.
In other words, we optimize the system to (i) support PGST from $u$ to $v$ and (ii) reach an acceptable transfer fidelity in as little time as possible.
In practical applications, such an optimization is always necessary.
Since PGST by definition is an asymptotic property, the underlying network may reach a suitably high transfer fidelity only after prohibitively long transfer times.
For the setup shown in \cref{fig:CLSTransfer} \textbf{(a)}, we restricted the optimization to the subspace where all on-site potentials vanish, and obtained a maximum transfer fidelity
\begin{equation*}
	F(T_{\text{opt}} = 10.8345) = 0.996
\end{equation*}
for $a=0.7975,\; b=0.8103, \; c=0.8880,\; d=2.3473,\; e=2.3005,\; f=0.3061,\; g=0.5489$.

In order to investigate the transfer of compact localized states, we first apply the above set of modifications to \cref{fig:CLSTransfer} \textbf{(a)} and equip it with two compact localized states. The modified setup is shown in \cref{fig:CLSTransfer} \textbf{(b)} and supports the two CLS $\ket{u_{-}} = \frac{\ket{u_{1}} - \ket{u_{2}}}{\sqrt{2}}$ and $\ket{v_{-}} = \frac{\ket{v_{1}} - \ket{v_{2}}}{\sqrt{2}}$.
By performing instantaneous coupling flips at $t=0$ and $T_{\text{opt}}$, we can transfer $\ket{u_{-}}$ to $\ket{v_{-}}$ (and vice versa)
with the fidelity $F(T_{\text{opt}})$. We now slightly change the protocol and switch the couplings by performing linear \emph{ramps} with a duration $\delta t$. The ramps are started at $t=0$ and $t=T_{\text{opt}}$, so that the transfer process is finished at $t=T_{\text{opt}} + \delta t$. The process is sketched in \cref{fig:CLSTransfer} \textbf{(c)} and \textbf{(d)}. \Cref{fig:CLSTransfer} \textbf{(c)} shows the setup at $t=0$, where the state of the system is given by $\ket{\Psi(t=0)} = \ket{u_{-}}$ (white circles denoting sites with zero amplitude). The transfer process is then started by linearly ramping down $g'(t)$  such that $g'(t=\delta t) = -g/\sqrt{2}$. At $t=T_{\text{opt}}$, these are then linearly ramped up again, reaching their final value $g'(T_{\text{opt}} + \delta t) = g/\sqrt{2}$. The pulse $g'(t)$ is shown in \cref{fig:CLSTransfer} \textbf{(f)} for $\delta t = T_{\text{opt}}/10$. \Cref{fig:CLSTransfer} \textbf{(d)} shows the final situation, where white sites now denote very low (but not necessarily zero) amplitudes of the final state $\ket{\Psi(T_{\text{opt}} + \delta t)}$.
In \cref{fig:CLSTransfer} \textbf{(e)}, the transfer fidelity is plotted against the pulse duration $\delta t$.
Quite counter-intuitively, the fidelity of transferring compact localized states \emph{increases} first for increasing $\delta t$.
Investigating the cause for this behavior would certainly be a worthwhile topic for further research.
For larger $\delta t$, the fidelity falls off as expected, but overall remains quite high. Even for comparatively slow ramps of $\delta t = T_{\text{opt}}/10$, the transfer fidelity decreases only by roughly $10^{-4}$. Notably, this high robustness against slow control pulses was also observed in Ref. \cite{Rontgen2018aQuantumNetworkTransferStorage} for the case of linear chains equipped with compact localized states.

\section{Brief Conclusion} \label{sec:conclusions}

We presented a method to design Hamiltonians $\hamil$ featuring pretty good state transfer (PGST) between two sites.
A necessary condition for PGST is that these two sites are so-called strongly cospectral, which means that all eigenstates have parity $\pm 1$ on these two sites.
We showed how Hamiltonians featuring strongly cospectral sites can be designed.
We then relied on so-called isospectral reductions of these Hamiltonians to yield a factoring of their characteristic polynomial in terms of smaller polynomials $P_{\pm}$, which are related to eigenvectors with parity $\pm 1$ on $u$ and $v$.
The motivation for this factorization is the fact that PGST automatically arises in setups where the coefficients of $P_{\pm}$ fulfill a set of relations, as has recently been shown by Eisenberg et al. \cite{Eisenberg2018DMPrettyGoodQuantumState}.
Equipped with explicit knowledge of $P_{\pm}$, we show how they can be properly manipulated by changing couplings and on-site potentials whilst maintaining the strong cospectrality.
Through these manipulations, PGST can therefore be achieved in certain setups, and we develop our method into an algorithm to design PGST Hamiltonians.
We further show how Hamiltonians featuring PGST can be equipped with so-called compact localized states (CLS).
Such states are eigenstates of $\hamil$ and are strictly localized on a spatially finite (and usually very small) domain, which allows for robust storage of qubits encoded into such CLSs. We further present time-dependent protocols which allow for PGST of CLSs.
Our work opens new routes towards flexible design of PGST networks and broadens their scope to allow for robust storage as well.
An important future task is to investigate how well the transfer fidelity of PGST Hamiltonians within a given maximal transfer time $T_{\text{max}}$ can be optimized by parameter tuning.
This task should be supported by the algorithm presented in this work, as it allows to obtain rather small parameter spaces $\coParam$ in which a given parameter dependent Hamiltonian $H(\coParam)$ features PGST.

\section{Acknowledgments}
	M.R. gratefully acknowledges financial support by the `Stiftung der deutschen Wirtschaft' in the framework of a scholarship.
	N.E.P. gratefully acknowledges financial support from the Hellenic Foundation for Research and Innovation (HFRI) and the General Secretariat for Research and Technology (GSRT) under the HFRI PhD Fellowship Grant No. 868.
	I.B. acknowledges financial support by Greece and the European Union
	(European Social Fund -- ESF) through the Operational Programme ``Human Resources Development, Education and Lifelong Learning'' in the context of the project ``Reinforcement of Postdoctoral Researchers'' (MIS-5001552), implemented by the State Scholarships Foundation (IKY). M.P. gratefully acknowledges financial support by the `Studienstiftung des deutschen Volkes' in the framework of a scholarship.

\appendix

\section{Proof for the interpretation of matrix entries of powers of $\hamil^{k}$ in terms of walks} \label{app:walkInterpretationProof}
We now prove \cref{eq:walkInterpretation} which states that
\begin{equation} \label{eq:walkInterpretationRepetition}
(\hamil^{k>0})_{a,b} = \sum_{p} w\left( p_{a,b}^{(k)} \right)
\end{equation}
where $w\left( p_{a,b}^{(k)} \right)$ denotes the weight of one possible walk of length $k$ between vertices $a$ and $b$, and the sum is over all such walks.
To this end, we write $(\hamil^{k>0})_{a,b}$ as
\begin{equation} \label{eq:walkInterpretationExpansion}
	(\hamil^{k>0})_{a,b} = \sum_{l_{1},\ldots{},l_{k-1}} \hamil_{a,l_{1}} \hamil_{l_{1},l_{2}} \ldots \hamil_{l_{k-2},l_{k-1}} \hamil_{l_{k-1},b}
\end{equation}
where each index $l_{i}$ goes from $1$ to $N$ with $\hamil \in \mathbb{R}^{N \times N}$.
We now interpret every term $\hamil_{i,j}$ occurring in \cref{eq:walkInterpretationExpansion} as the weight of the edge connecting sites $i$ and $j$.
Each summand is, therefore, the weight of a walk of length $k$ from site $a$ to $b$ via the sites $l_{1},l_{2},\ldots{},l_{k-1}$, where walks over physically non-existing edges (i.e., those with vanishing weights $\hamil_{i,j} = 0$) naturally have vanishing weights as well.
As a consequence, we can write \cref{eq:walkInterpretationExpansion} as \cref{eq:walkInterpretationRepetition}, and the value of the matrix element $(\hamil^{k})_{a,b}$ is equal to the sum of weights of all walks of length $k$ between vertices $a$ and $b$.

\section{Proofs for cospectrality} \label{app:CospectralityProofs}
\begin{figure} 
	\centering
	\includegraphics[max size={.5\columnwidth}{\textheight}]{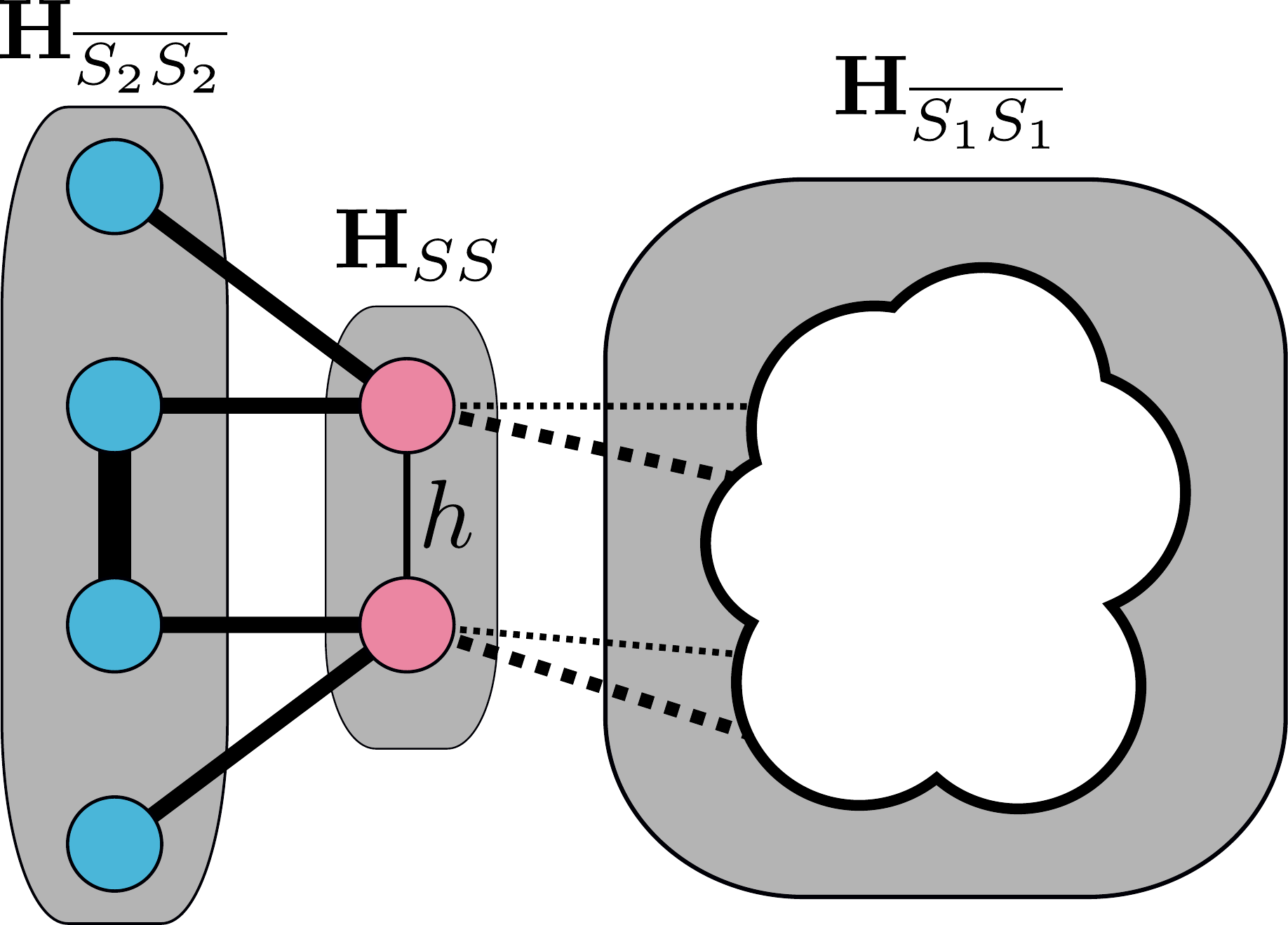}
	\caption{Visualization of the decomposition of $\hamil$ into subsystems.
	}
	\label{fig:proofCospectrality}
\end{figure}
We now prove the validity of the design mechanism presented in \cref{sec:designingCospectral}. In particular, we will prove that all changes shown in \cref{fig:cospectralityDesign} applied onto an already cospectral network keep this cospectrality.
To this end, we proof the following
\begin{theorem}
	Let
	\begin{equation}
	\hamil_{i} = \begin{pmatrix}
	\hamil_{SS} & \hamil_{S\overline{S_{i}}} \\
	\hamil_{\overline{S_{i}}S} & \hamil_{\overline{S_{i}S_{i}}}
	\end{pmatrix} \in \mathbb{R}
	\end{equation}
	be symmetric matrices with $i=1,2$ and
	\begin{equation*}
	\hamil_{SS} = \begin{pmatrix}
	E & h \\
	h & E
	\end{pmatrix}
	\end{equation*}
	bisymmetric. Denote the two sites in $\hamil_{SS}$ as $S = \{u,v\}$.
	If $u,v$ are cospectral in $\hamil_{i}$, then they are also cospectral in
	\begin{equation*} 
		\hamil' = \begin{pmatrix}
		\hamil_{SS} & \hamil_{S\overline{S_{1}}} & \hamil_{S\overline{S_{2}}} \\
		\hamil_{\overline{S_{1}}S} & \hamil_{\overline{S_{1}S_{1}}} & \mathbf{0} \\
		\hamil_{\overline{S_{2}}S} & \mathbf{0}& \hamil_{\overline{S_{2}S_{2}}}
		\end{pmatrix} .
	\end{equation*}
\end{theorem}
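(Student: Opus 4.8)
The plan is to bypass direct walk-counting and instead use the characterisation of cospectrality through isospectral reductions, Theorem~3.3 of Ref.~\cite{Kempton2019AMCharacterizingCospectralVerticesIsospectral}: for a \emph{symmetric} matrix, two vertices $u,v$ are cospectral if and only if the isospectral reduction over $S=\{u,v\}$ is bisymmetric. Since $\hamil'$ is symmetric (each $\hamil_{i}$ is symmetric and the cloud--cloud block of $\hamil'$ is the symmetric matrix $\mathbf{0}$), it therefore suffices to show that $\mathbf{R}_{S}(\hamil',\lambda)$ is bisymmetric for all $\lambda$ at which it is defined.

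First I would record, for $i=1,2$, the ``self-energy'' $\Sigma_{i}(\lambda) = \hamil_{S\overline{S_{i}}}\big(\hamil_{\overline{S_{i}S_{i}}} - \mathbf{I}\lambda\big)^{-1}\hamil_{\overline{S_{i}}S}$, a $2\times2$ matrix of rational functions, so that $\mathbf{R}_{S}(\hamil_{i},\lambda) = \hamil_{SS} - \Sigma_{i}(\lambda)$. Because $u,v$ are cospectral in the symmetric matrix $\hamil_{i}$, Theorem~3.3 of Ref.~\cite{Kempton2019AMCharacterizingCospectralVerticesIsospectral} makes $\mathbf{R}_{S}(\hamil_{i},\lambda)$ bisymmetric; as $\hamil_{SS}$ is bisymmetric by hypothesis and the $2\times2$ bisymmetric matrices form a linear subspace, $\Sigma_{i}(\lambda) = \hamil_{SS} - \mathbf{R}_{S}(\hamil_{i},\lambda)$ is bisymmetric for each $i$.

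The one computational point is that the two clouds contribute additively to the self-energy of $\hamil'$. Writing $\overline{S} = \overline{S_{1}}\cup\overline{S_{2}}$ (a disjoint union), the block $\hamil'_{\overline{S}\overline{S}}$ is block-diagonal, $\hamil'_{\overline{S}\overline{S}} = \mathrm{diag}\big(\hamil_{\overline{S_{1}S_{1}}},\hamil_{\overline{S_{2}S_{2}}}\big)$, because there is no direct coupling between $\overline{S_{1}}$ and $\overline{S_{2}}$; hence its resolvent is block-diagonal too, and with $\hamil'_{S\overline{S}} = \big(\hamil_{S\overline{S_{1}}}\ \ \hamil_{S\overline{S_{2}}}\big)$ one obtains
\begin{equation*}
\mathbf{R}_{S}(\hamil',\lambda) = \hamil_{SS} - \hamil'_{S\overline{S}}\big(\hamil'_{\overline{S}\overline{S}} - \mathbf{I}\lambda\big)^{-1}\hamil'_{\overline{S}S} = \hamil_{SS} - \Sigma_{1}(\lambda) - \Sigma_{2}(\lambda) .
\end{equation*}
This is a sum of bisymmetric matrices, hence bisymmetric, and applying Theorem~3.3 of Ref.~\cite{Kempton2019AMCharacterizingCospectralVerticesIsospectral} once more (now in the ``bisymmetric $\Rightarrow$ cospectral'' direction) to the symmetric matrix $\hamil'$ shows $u,v$ are cospectral in $\hamil'$.

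There is no genuinely hard step; the parts needing care are only that $\hamil'$ is symmetric so Theorem~3.3 applies in both directions, and that all of the above are identities of rational functions of $\lambda$, valid wherever the relevant resolvents exist, which is all the bisymmetry of an isospectral reduction requires. A purely combinatorial alternative would expand $[(\hamil')^{k}]_{u,u}$ and $[(\hamil')^{k}]_{v,v}$ into sums over closed walks, cut each closed walk at its visits to $S$ into excursions lying entirely in one cloud, and match $u$- and $v$-rooted contributions cloud by cloud using the bisymmetry of $\hamil_{SS}$ and the assumed cospectrality in each $\hamil_{i}$; this reconstructs the self-energy expansion above but is considerably more laborious.
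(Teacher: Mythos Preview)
Your proposal is correct and follows essentially the same route as the paper: both arguments invoke the cospectrality\,$\Leftrightarrow$\,bisymmetry characterisation of Ref.~\cite{Kempton2019AMCharacterizingCospectralVerticesIsospectral}, exploit the block-diagonality of $\hamil'_{\overline{S}\,\overline{S}}$ to write $\mathbf{R}_{S}(\hamil',\lambda)=\hamil_{SS}-\Sigma_{1}(\lambda)-\Sigma_{2}(\lambda)$, and conclude by closure of bisymmetric matrices under addition. Your write-up is in fact slightly more explicit than the paper's in checking that $\hamil'$ is symmetric and that each $\Sigma_{i}$ (not just each $\mathbf{R}_{S}(\hamil_{i},\lambda)$) is bisymmetric.
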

\begin{proof}
	We use the fact \cite{Kempton2019AMCharacterizingCospectralVerticesIsospectral} that the isospectral reduction $\mathcal{\mathbf{R}}_{S=\{u,v\}}(\hamil,\lambda)$ is, for symmetric $\hamil$, bisymmetric if and only if the sites $u$ and $v$ are cospectral. Individually, we therefore have that
	\begin{equation*}
		\mathcal{\mathbf{R}}_{S}(\hamil_{i},\lambda) = \hamil_{SS} - \hamil_{S\overline{S_{i}}}\left(\hamil_{\overline{S_{i}S_{i}}} - \lambda \mathbf{I} \right)^{-1} \hamil_{\overline{S_{i}}S}
	\end{equation*}
	is bisymmetric.
	We then evaluate the isospectral reduction of $\hamil'$, which can be written as
\begin{equation} \label{eq:decompositionOfIsospectral}
\mathcal{\mathbf{R}}_{S}(\hamil',\lambda) = \hamil_{SS} - \mathbf{A} \mathbf{B}^{-1} \mathbf{A}^{T}
\end{equation}
where
\begin{equation*}
	\mathbf{A} = (\hamil_{S\overline{S_{1}}}\,,\; \hamil_{S\overline{S_{2}}}),\;\; \mathbf{B} = \begin{pmatrix}
	\hamil_{\overline{S_{1}S_{1}}} - \lambda \mathbf{I} & \mathbf{0} \\
	\mathbf{0} & \hamil_{\overline{S_{2}S_{2}}} - \lambda \mathbf{I}
	\end{pmatrix} .
\end{equation*}
\Cref{eq:decompositionOfIsospectral} then becomes
\begin{equation*}
	\mathcal{\mathbf{R}}_{S}(\hamil',\lambda) = \hamil_{SS} - \sum_{i} \hamil_{S\overline{S_{i}}}\left(\hamil_{\overline{S_{i}S_{i}}} - \lambda \mathbf{I} \right)^{-1} \hamil_{\overline{S_{i}}S} .
\end{equation*}
This expression is bisymmetric, since $\hamil_{SS}$ as well as each of the two summands are individually bisymmetric, and sums of bisymmetric matrices are bisymmetric again. Due to the connection between bisymmetry of $\mathcal{\mathbf{R}}_{S = \{u,v\}}(\hamil,\lambda)$ and the cospectrality of sites $u$ and $v$, we have therefore proven the above theorem.
\end{proof}
To apply this theorem to \cref{sec:designingCospectral}, we divide the setups shown their into three parts, as shown in \cref{fig:proofCospectrality} for a slightly modified version of the graph depicted in \cref{fig:cospectralityDesign} \textbf{(b2)}. Given the cospectrality of the two red sites, denoted by $u$ and $v$, in the original system $\textbf{H}_{1}$, the setup is modified by adding the subsystem containing the sites $\overline{S_{2}}$.
By the above theorem, we know that the cospectrality of $u$ and $v$ is kept also in the composite system $\textbf{H}'$ provided that $\mathcal{\mathbf{R}}_{S}(\textbf{H}_{2},\lambda)$ is bisymmetric.
By explicitly computing the isospectral reductions for each graph presented in \cref{fig:cospectralityDesign}, it can be proven that all of them are cospectral. We have thus proven the validity of the design mechanism proposed in \cref{sec:designingCospectral}.

\section{Proof for the form of $P_\mathbf{R}^{\pm}$} \label{app:ProofOfPPrime}
We want to prove that, for bisymmetric
\begin{equation}
\mathcal{\mathbf{R}}_{S}(\hamil,\lambda) = \begin{pmatrix}
A(\lambda) & B(\lambda) \\
B(\lambda) & A(\lambda)
\end{pmatrix}
\end{equation}
the characteristic polynomials $P_{\mathbf{R}}^{\pm}(\lambda)$ related to eigenvectors of positive and negative parity, respectively, are given by $P_{\mathbf{R}}^{\pm}(\lambda) = A(\lambda) \pm B(\lambda) - \lambda$.

To prove this, we perform a similarity transform $\mathbf{R}'_{S}(\hamil,\lambda) = A^{-1} \mathbf{R}_{S}(\hamil,\lambda) A$, with
\begin{equation*}
A = \begin{pmatrix}
1 & 1\\
1 & -1
\end{pmatrix}
\end{equation*}
so that
\begin{equation*}
\mathbf{R}'_{S}(\hamil,\lambda) = \begin{pmatrix}
A(\lambda) + B(\lambda) & 0\\
0 & A(\lambda) - B(\lambda)
\end{pmatrix}
\end{equation*}
becomes block-diagonal. Therefore, its eigenvectors are obviously $(1,0)^{T}$ [those of the first block] with eigenvalues $\{\lambda^{1}_{i}\}$ and $(0,1)^{T}$ [those of the second block] with eigenvalues $\{\lambda^{2}_{j}\}$. Multiplying these eigenvectors by $A$ then yields the corresponding eigenvectors of $\mathcal{\mathbf{R}}_{S}(M,\lambda)$.
Therefore, these are obviously $(1,1)^{T}$ with eigenvalues $\{\lambda^{1}_{i}\}$ and $(1,-1)^{T}$ with eigenvalues $\{\lambda^{2}_{j}\}$.
We remind the reader that, since $\mathcal{\mathbf{R}}_{S}(M,\lambda)$ depends on $\lambda$, it can have \emph{more} then two eigenvectors, and these need not be linearly independent.
The eigenvalues $\{\lambda^{1}_{i}\}, \{\lambda^{2}_{j}\}$ are therefore related to eigenvectors of positive and negative parity, respectively, and are the solutions to the equations
\begin{align*}
det\Big( A(\lambda^{1}_{i}) + B(\lambda^{1}_{i}) - \lambda^{1}_{i} \Big) &= 0\\
det\Big( A(\lambda^{2}_{j}) - B(\lambda^{2}_{j}) - \lambda^{2}_{j} \Big) &= 0 .
\end{align*}
It is thus obvious that $P_{\mathbf{R}}^{\pm}(\lambda) = A(\lambda) \pm B(\lambda) - \lambda$ are the characteristic polynomials related to eigenvectors of positive and negative parity, respectively, and that
\begin{equation*}
	P_{\mathbf{R}}^{+} \cdot P_{\mathbf{R}}^{-} = det\big(\mathcal{\mathbf{R}}'_{S}(H) \big) = det\big(\mathbf{R}_{S}(H) \big) .
\end{equation*}

\section{Mathematical details on the transfer of compact localized states} \label{app:CLSTransfer}
We now prove the statements made in \cref{sec:CLSTransfer}. The proofs are similar to those done in \cite{Rontgen2018aQuantumNetworkTransferStorage}, but are included here so that the current work is self-contained.

We assume that the original network is described by a Hamiltonian $\hamil$ and supports PGST between sites $S = \{u,v\}$.
We then partition the system such that
\begin{equation*}
	\hamil = \begin{pmatrix}
	\hamil_{\overline{SS}}  & \hamil_{S\overline{S}} \\
	\hamil_{\overline{S}S} & \hamil_{SS}
	\end{pmatrix} \in \mathbb{R}^{(N+2) \times (N+2)}.
\end{equation*}
As stated in \cref{sec:CLSTransfer}, we demand $\hamil$ to have no direct coupling between $u$ and $v$, so that
\begin{equation*}
	\hamil_{SS} = \begin{pmatrix}
	E & 0 \\
	0 & E
	\end{pmatrix}  \in \mathbb{R}^{2 \times 2}
\end{equation*}
is diagonal.
We denote the eigenvectors of $\hamil$ as
\begin{equation*}
\ket{\phi^{\nu}} = \begin{pmatrix}
\mathbf{w}^{\nu}\\
x_{u}^{\nu}\\
x_{v}^{\nu}
\end{pmatrix}\;,\; \in \mathbb{R}^{(N+2) \times 1}
\end{equation*}
with $\mathbf{w}^{\nu} \in \mathbb{R}^{N\times{}1}$.
The fidelity for transfer from $\ket{u}$ to $\ket{v}$ is given as
\begin{equation*}
|\braket{u|e^{i \hamil t}|v}|^2 = \left| \sum_{\nu}^{} x_{u}^{\nu} (x_{v}^{\nu})^{*} e^{i \lambda_{\nu} t}\right|^2 .
\end{equation*}

We then modify the system as shown in \cref{fig:CLSDemonstration} \textbf{(c)}, so that its Hamiltonian becomes
\begin{equation*}
\hamil_{\text{m}} = \begin{pmatrix}
\hamil_{\overline{SS}} & \frac{1}{\sqrt{2}} \hamil_{S\overline{S}} & \frac{1}{\sqrt{2}} \hamil_{S\overline{S}}\\
\frac{1}{\sqrt{2}} \hamil_{\overline{S}S} & \hamil_{SS} & \mathbf{0}_{2\times{}2}\\
\frac{1}{\sqrt{2}} \hamil_{\overline{S}S} & \mathbf{0}_{2\times{}2} & \hamil_{SS}
\end{pmatrix}.
\end{equation*}
By means of the \emph{`equitable partition theorem'} \cite{Barrett2017LAIA513409EquitableDecompositionsGraphsSymmetries,Francis2017LAIA532432ExtensionsApplicationsEquitableDecompositions,Rontgen2018PRB97035161CompactLocalizedStatesFlat} its $N+4$ eigenstates can then be shown to be 
\begin{equation*}
\ket{\phi^{\nu}} = \begin{pmatrix}
\mathbf{w}^{\nu}\\
\frac{1}{\sqrt{2}} x_{u}^{\nu}\\
\frac{1}{\sqrt{2}} x_{v}^{\nu}\\
\frac{1}{\sqrt{2}} x_{u}^{\nu}\\
\frac{1}{\sqrt{2}} x_{v}^{\nu}\\
\end{pmatrix},
\ket{\phi^{N+2+r}} = 
\begin{pmatrix}
\mathbf{0}_{N\times{}1}\\
\mathbf{z}^{r}\\
-\mathbf{z}^{r}
\end{pmatrix}
\end{equation*}
with $\nu = 1,\ldots{},N+2$ and $r=1,2$. The $\mathbf{z}^{r}\in \mathbb{C}^{2\times{}1}$ are the eigenvectors of the \emph{isolated} $\hamil_{SS}$.
We now denote the first $N$ sites as $\overline{S}$, and the remaining four as $u_{1},v_{1},u_{2},v_{2}$.
The fidelity
\begin{equation}
F'(t) = |\braket{u_{+}|exp(i \hamil_{\text{m}} t)|v_{+}}|^2
\end{equation}
[\cref{eq:symmetrizedFidelity} from \cref{sec:CLSTransfer}] for the transfer of symmetric excitations $\ket{u_{+}} = \frac{\ket{u_{1}} + \ket{u_{2}}}{\sqrt{2}}$ to $\ket{v_{+}} = \frac{\ket{v_{1}} + \ket{v_{2}}}{\sqrt{2}}$ can then be evaluated as
\begin{align*}
	F'(t) =&\left|\braket{u_{+}|exp(i \hamil_{\text{m}} t)|v_{+}}\right|^2 \\
	=& \left|\sum_{\nu=1}^{N+2} \braket{u_{+}|\phi^{\nu}} \braket{\phi^{\nu}|v_{+}} e^{i \lambda_{\nu} t}\right|^2 \\
	=&\left|\sum_{\nu=1}^{N+2} x_{u}^{\nu} (x_{v}^{\nu})^{*} e^{i \lambda_{\nu} t}\right|^2 \\
	=& F(t) 
\end{align*}
as claimed in \cref{sec:CLSTransfer}, since the overlap of $\ket{u_{+}},\ket{v_{+}}$ with $\ket{\phi^{N + 2 + r}}$ vanishes.

We now look at the compact localized states supported by $\hamil_{\text{m}}$. There are two of these, given by $\ket{I'} = \frac{\ket{u_{1}} - \ket{u_{2}}}{\sqrt{2}}$ (localized on sites $u_{1}$ and $u_{2}$) and $\ket{F'} = \frac{\ket{v_{1}} - \ket{v_{2}}}{\sqrt{2}}$ (localized on sites $v_{1}$ and $v_{2}$).
To transfer $\ket{I'}$ to $\ket{F'}$, we perform an instantaneous flip of couplings at $t=0$, so that
\begin{equation*}
\hamil_{\text{m}} \rightarrow \hamil'_{\text{m}} = \begin{pmatrix}
\hamil_{\overline{SS}} & \frac{1}{\sqrt{2}} \hamil_{S\overline{S}} & -\frac{1}{\sqrt{2}} \hamil_{S\overline{S}}\\
\frac{1}{\sqrt{2}} \hamil_{\overline{S}S} & \hamil_{SS} & \mathbf{0}_{2\times{}2}\\
-\frac{1}{\sqrt{2}} \hamil_{\overline{S}S} & \mathbf{0}_{2\times{}2} & \hamil_{SS}
\end{pmatrix}
\end{equation*}
and $\ket{I'},\ket{F'}$ are no longer eigenstates of $\hamil'_{\text{m}}$. By means of the so-called \emph{`nonequitable partition theorem'} \cite{Fritscher2016SJMAA37260ExploringSymmetriesDecomposeMatrices,Rontgen2018PRB97035161CompactLocalizedStatesFlat} the $N+4$ eigenstates of $\hamil'_{\text{m}}$ can then shown to be 
\begin{equation*}
\ket{\phi^{\nu}} = \begin{pmatrix}
\mathbf{w}^{\nu}\\
\frac{1}{\sqrt{2}} x_{u}^{\nu}\\
\frac{1}{\sqrt{2}} x_{v}^{\nu}\\
-\frac{1}{\sqrt{2}} x_{u}^{\nu}\\
-\frac{1}{\sqrt{2}} x_{v}^{\nu}\\
\end{pmatrix},
\ket{\phi^{N+2+r}} = 
\begin{pmatrix}
\mathbf{0}_{N\times{}1}\\
\mathbf{z}^{r}\\
\mathbf{z}^{r}
\end{pmatrix}
\end{equation*}
with $\nu = 1,\ldots{},N+2$, $r=1,2$, and $\mathbf{z}^{r}$ as above.
We then yield
\begin{align*}
	F''(t) =& |\braket{u_{-}|exp(i \hamil'_{\text{m}} t)|v_{-}}|^2 \\
	=& \left|\sum_{\nu=1}^{N+2} \braket{u_{-}|\phi^{\nu}} \braket{\phi^{\nu}|v_{-}}e^{i \lambda_{\nu} t} \right|^2 \\
	=& \left|\sum_{\nu=1}^{N+2} x_{u}^{\nu} (x_{v}^{\nu})^{*} e^{i \lambda_{\nu} t}\right|^2\\
	=& F'(t) = F(t)
\end{align*}
as claimed in \cref{sec:CLSTransfer}.

Thus, if $\hamil$ supports PGST of single site excitations $u$ and $v$, $\hamil_{\text{m}}$ supports two compact localized states, and by switching $\hamil_{\text{m}} \rightarrow \hamil'_{\text{m}}$, these compact localized states can be pretty well transferred. As explained in \cref{sec:CLSTransfer}, the compact localized states can also be stored with time-independent fidelity $F(t>T_{f}) = F(T_{f})$ by instantaneously switching $\hamil'_{\text{m}} \overset{t=T_{f}}{\rightarrow} \hamil_{\text{m}}$.

\end{document}